\newcommand{\problem}[1]{\textsc{#1}\xspace}
\newcommand{\R}{\mathbb{R}}
\newcommand{\tw}{\mathrm{tw}}
\newcommand{\td}{\mathrm{td}}
\theoremstyle{plain}
\newtheorem{theorem}{Theorem}
\newtheorem{lemma}{Lemma}
\newtheorem{claim}{Claim}
\newtheorem{corollary}{Corollary}
\title{On the Power of Tree-Depth for Fully Polynomial FPT Algorithms}
\author{Yoichi Iwata
\footnote{National Institute of Informatics 2-1-2 Hitotsubashi, Chiyoda-ku, Tokyo, Japan}
\footnote{yiwata@nii.ac.jp}
\footnote{Supported by JSPS KAKENHI Grant Number JP17K12643}
\and Tomoaki Ogasawara
\footnote{The University of Tokyo 7-3-1, Hongo, Bunkyo-ku, Tokyo, Japan}
\footnote{t.ogasawara@is.s.u-tokyo.ac.jp}
\and Naoto Ohsaka
\footnotemark[4]
\footnote{ohsaka@is.s.u-tokyo.ac.jp}
}
\date{}
\begin{document}

\maketitle

\begin{abstract}
There are many classical problems in P whose time complexities have not been improved over the past decades.
Recent studies of ``Hardness in P'' have revealed that, for several of such problems, the current fastest algorithm is the best possible under some complexity assumptions.
To bypass this difficulty, Fomin~et~al. (SODA 2017) introduced the concept of \emph{fully polynomial FPT algorithms}.
For a problem with the current best time complexity $O(n^c)$, the goal is to design an algorithm running in $k^{O(1)}n^{c'}$ time for a parameter $k$ and a constant $c'<c$.

In this paper, we investigate the complexity of graph problems in P parameterized by \emph{tree-depth}, a graph parameter related to tree-width.
We show that a simple divide-and-conquer method can solve many graph problems, including
\problem{Weighted Matching}, \problem{Negative Cycle Detection}, \problem{Minimum Weight Cycle}, \problem{Replacement Paths}, and \problem{2-hop Cover},
in $O(\mathrm{td}\cdot m)$ time or $O(\mathrm{td}\cdot (m+n\log n))$ time, where $\mathrm{td}$ is the tree-depth of the input graph.
Because any graph of tree-width $\mathrm{tw}$ has tree-depth at most $(\mathrm{tw}+1)\log_2 n$, our algorithms also run in $O(\mathrm{tw}\cdot m\log n)$ time or $O(\mathrm{tw}\cdot (m+n\log n)\log n)$ time.
These results match or improve the previous best algorithms parameterized by tree-width.
Especially, we solve an open problem of fully polynomial FPT algorithm for \problem{Weighted Matching} parameterized by tree-width posed by Fomin~et~al.
\end{abstract}

\section{Introduction}
There are many classical problems in P whose time complexities have not been improved over the past decades.
For some of such problems, recent studies of ``Hardness in P'' have provided evidence of why obtaining faster algorithms is difficult.
For instance, Vassilevska Williams and Williams~\cite{WilliamsW10} and Abboud, Grandoni and Vassilevska Williams~\cite{AbboudGW15} showed that
many problems including \problem{Minimum Weight Cycle}, \problem{Replacement Paths}, and \problem{Radius} are equivalent to
\problem{All Pair Shortest Paths (APSP)} under subcubic reductions; that is, if one of them admits a subcubic-time algorithm, then all of them do.

One of the approaches to bypass this difficulty is to analyze the running time by introducing another measure, called a \emph{parameter}, in addition to the input size.
In the theory of parameterized complexity, a problem with a parameter $k$ is called \emph{fixed parameter tractable (FPT)} if it can be solved in $f(k) \cdot |I|^{O(1)}$ time for some function $f(k)$ that does not depend on the input size $|I|$.
While the main aim of this theory is to provide fine-grained analysis of NP-hard problems, it is also useful for problems in P.
For instance, a simple dynamic programming can solve \problem{Maximum Matching} in $O(3^\tw m)$ time, where $m$ is the number of edges and
$\tw$ is a famous graph parameter called \emph{tree-width} which intuitively measures how much a graph looks like a tree (see Section~\ref{sec:pre} for the definition).
Therefore, it runs in linear time for any graph of constant tree-width, which is faster than the current best $O(\sqrt{n} m)$ time for the general case~\cite{blum1990new,vazirani1994theory,DBLP:journals/jacm/GabowT91}.

When working on NP-hard problems, we can only expect superpolynomial (or usually exponential) function $f(k)$ in the running time of FPT algorithms
(unless $k$ is exponential in the input size).
On the other hand, for problems in P, it might be possible to obtain a $k^{O(1)}|I|^{O(1)}$-time FPT algorithm.
Such an algorithm is called \emph{fully polynomial FPT}.
For instance,
Fomin, Lokshtanov, Pilipczuk, Saurabh and Wrochna~\cite{Fomin17} obtained an $O(\tw^4\cdot n\log^2 n)$-time (randomized) algorithm for \problem{Maximum Matching}
and left as an open problem whether a similar running time is possible for \problem{Weighted Matching}.
In contrast to the $O(3^\tw m)$-time dynamic programming, this algorithm is faster than the current best general-case algorithm already for graphs of $\tw=O(n^{\frac{1}{8}-\epsilon})$.
In general, for a problem with the current best time complexity $O(n^c)$, the goal is to design an algorithm running in $O(k^d n^{c'})$ time for
some small constants $d$ and $c'<c$.
Such an algorithm is faster than the current best general-case algorithm already for inputs of $k=O(n^{(c-c')/d-\epsilon})$.
On the negative side, Abboud, Vassilevska Williams and Wang~\cite{AbboudWW16} showed that $\problem{Diameter}$ and $\problem{Radius}$ do not admit
$2^{o(\tw)}n^{2-\epsilon}$-time algorithms under some plausible assumptions.
In this paper, we give new or improved fully polynomial FPT algorithms for several classical graph problems.
Especially, we solve the above open problem for \problem{Weighted Matching}.

\subparagraph{Our approach.}
Before describing our results, we first give a short review of existing work on fully polynomial FPT algorithms parameterized by tree-width and explain our approach.
There are roughly three types of approaches in the literature.
The first approach is to use a polynomial-time dynamic programming on a tree-decomposition, which has been mainly used for problems related to
shortest paths~\cite{chaudhuri2000shortest,planken2012computing,DBLP:conf/edbt/AkibaSK12,DBLP:conf/sigmod/Wei10}.
The second approach is to use an $O(\tw^3\cdot n)$-time Gaussian elimination of matrices of small tree-width developed by Fomin~et~al.~\cite{Fomin17}.
The above-mentioned $O(\tw^4\cdot n\log^2 n)$-time algorithm for \problem{Maximum Matching} was obtained by this approach.
The third approach is to apply a divide-and-conquer method exploiting the existence of small \emph{balanced separators}.
This approach was first used for planar graphs by Lipton and Tarjan~\cite{LiptonT80}.
Using the existence of $O(\sqrt{n})$-size balanced separators, they obtained an $O(n^{1.5})$-time algorithm for \problem{Maximum Matching} and an $O(n^{1.5}\log n)$-time algorithm for \problem{Weighted Matching} for planar graphs.
For graphs of bounded tree-width, Akiba, Iwata and Yoshida~\cite{akiba2013fast} obtained an $O(\tw\cdot (m+n\log n)\log n)$-time algorithm for \problem{2-hop Cover}, which is a problem of constructing a distance oracle, and
Fomin~et~al.~\cite{Fomin17} obtained an $O(\tw\cdot m\log n)$-time\footnote{%
While the running time shown in~\cite{Fomin17} is $O(\tw^2\cdot n\log n)$, we can easily see that it also runs in $O(\tw\cdot m\log n)$ time.
Because $m=O(\tw\cdot n)$ holds for any graphs of tree-width $\tw$, the latter is never worse than the former.
Note that $\tw\cdot n$ in the running time of other algorithms cannot be replaced by $m$ in general;
e.g., we cannot bound the running time of the Gaussian elimination by $O(\tw^2\cdot m)$, where $m$ is the number of non-zero elements.
}
algorithm for \problem{Vertex-disjoint $s-t$ Paths}.
We obtain fully polynomial FPT algorithms for a wide range of problems by using this approach.
Our key observation is that, when using the divide-and-conquer approach, another graph parameter called \emph{tree-depth} is more powerful than the tree-width.

A graph $G$ of tree-width $\tw$ admits a set $S$ of $\tw+1$ vertices, called a balanced separator, such that each connected component of $G-S$ contains at most $\frac{n}{2}$ vertices.
In both of the above-mentioned divide-and-conquer algorithms for graphs of bounded tree-width by Akiba~et~al.~\cite{akiba2013fast} and Fomin~et~al.~\cite{Fomin17}, after the algorithm recursively computes a solution for each connected component of $G-S$,
it constructs a solution for $G$ in $O(\tw\cdot (m+n\log n))$ time or $O(\tw\cdot m)$ time, respectively.
Because the depth of the recursive calls is bounded by $O(\log n)$, the total running time becomes $O(\tw\cdot (m+n\log n)\log n)$ or $O(\tw\cdot m\log n)$, respectively.

Here, we observe that, by using tree-depth, this kind of divide-and-conquer algorithm can be simplified and the analysis can be improved.
Tree-depth is a graph parameter which has been studied under various names~\cite{Schaffer89,KatchalskiMS95,BodlaenderDJKKMT98,NesetrilM06}.
A graph has tree-depth $\td$ if and only if there exists an \emph{elimination forest} of depth $\td$.
See Section~\ref{sec:pre} for the precise definition of the tree-depth and the elimination forest.
An important property of tree-depth is that any connected graph $G$ of tree-depth $\td$ can be divided into connected components of tree-depth at most $\td-1$ by removing a single vertex $r$.
Therefore, if there exists an $O(m)$-time or $O(m+n\log n)$-time \emph{incremental algorithm}, which constructs a solution for $G$ from a solution for $G-r$,
we can solve the problem in $O(\td\cdot m)$ time or $O(\td\cdot (m+n\log n))$ time, respectively.
Now, the only thing to do is to develop such an incremental algorithm for each problem.
We present a detailed discussion of this framework in Section~\ref{sec:framework}.
Because any graph of tree-width $\tw$ has tree-depth at most $(\tw+1)\log_2 n$~\cite{DBLP:books/daglib/0030491},
the running time can also be bounded by $O(\tw\cdot m\log n)$ or $O(\tw\cdot (m+n\log n)\log n)$.
Therefore, our analysis using tree-depth is never worse than the existing results directly using tree-width.
On the other hand, there are infinitely many graphs whose tree-depth has asymptotically the same bound as tree-width.
For instance, if every $N$-vertex subgraph admits a balanced separator of size $O(N^\alpha)$ for some constant $\alpha>0$ (e.g., $\alpha=\frac{1}{2}$ for $H$-minor free graphs),
both tree-width and tree-depth are $O(n^\alpha)$.
Hence, for such graphs, the time complexity using tree-depth is truly better than that using tree-width.

\subparagraph{Our results.}

\begin{table}[t]
	\begin{center}
	\caption{Comparison of previous results and our results.
	$n$ and $m$ denotes the number of vertices and edges, $w$ denotes the width of the given tree-decomposition, and $d$ denotes the depth of the given elimination forest.
	The factor $d$ in our results can be replaced by $w\cdot \log n$.}\label{tab:summary}
  		\begin{tabular}{|l|l|l|} \hline
    Problem & Previous result & Our result\\ \hline \hline
    $\textsc{Maximum Matching}$ & $O(w^4 n\log^2 n)$~\cite{Fomin17} & $O(dm)$ \\ \hline
    $\textsc{Weighted Matching}$ & Open problem~\cite{Fomin17} & $O(d(m+n\log n))$ \\ \hline
    $\textsc{Negative Cycle Detection}$ & $O(w^2 n)$~\cite{planken2012computing} & $O(d(m+n\log n))$ \\ \hline
    $\textsc{Minimum Weight Cycle}$ & \multicolumn{1}{c|}{---} & $O(d(m+n\log n))$\\ \hline
    $\textsc{Replacement Paths}$ & \multicolumn{1}{c|}{---} & $O(d(m+ n\log n))$\\ \hline
    $\textsc{2-hop Cover}$ & $O(w(m+n\log n)\log n)$~\cite{akiba2013fast} & $O(d(m+n\log n))$\\ \hline
  		\end{tabular}
  	\end{center}
\end{table}

Table~\ref{tab:summary} shows our results and the comparison to the existing results on fully polynomial FPT algorithms parameterized by tree-width.
The formal definition of each problem is given in Section 4.
Because obtaining an elimination forest of the lowest depth is NP-hard, we assume that an elimination forest is given as an input
and the parameter for our results is the depth $d$ of the given elimination forest.
Similarly, for the existing results, the parameter is the width $w$ of the given tree-decomposition.
Note that, because a tree-decomposition of width $w$ can be converted into an elimination forest of depth $O(w\cdot \log n)$ in linear time~\cite{Schaffer89},
we can always replace the factor $d$ in our running time by $w\cdot \log n$.

The first polynomial-time algorithms for \problem{Maximum Matching} and \problem{Weighted Matching} were obtained by Edmonds~\cite{edmonds1965paths},
and the current fastest algorithms run in $O(\sqrt{n}m)$ time~\cite{blum1990new,vazirani1994theory,DBLP:journals/jacm/GabowT91}
and $O(n(m+n \log n))$ time~\cite{blum1990new}, respectively.
Fomin~et~al.~\cite{Fomin17}~obtained the $O(w^4 n\log^2 n)$-time randomized algorithm for \problem{Maximum Matching} by using an algebraic method and the fast computation of Gaussian elimination.
They left as an open problem whether a similar running time is possible for \problem{Weighted Matching}.
The general-case algorithms for these problems compute a maximum matching by iteratively finding an \emph{augmenting path}, and therefore, they are already incremental.
Thus, we can easily obtain an $O(dm)$-time algorithm for \problem{Maximum Matching} and an $O(d(m+n\log n))$-time algorithm for \problem{Weighted Matching}.
Note that the divide-and-conquer algorithms for planar matching by Lipton and Tarjan~\cite{LiptonT80} also uses this augmenting-path approach,
and our result can be seen as extension to bounded tree-depth graphs.
Our algorithm for \problem{Maximum Matching} is always faster\footnote{%
Note that for any graph of tree-width or tree-depth $k$, we have $m=O(kn)$.
} than the one by Fomin~et~al.~
and is faster than the general-case algorithm already when $d=O(n^{\frac{1}{2}-\epsilon})$.
Our algorithm for \problem{Weighted Matching} settles the open problem and is faster than the general-case algorithm already when $d=O(n^{1-\epsilon})$.

The current fastest algorithm for \problem{Negative Cycle Detection} is the classical $O(nm)$-time Bellman-Ford algorithm.
Planken~et~al.~\cite{planken2012computing} obtained an $O(w^2 n)$-time algorithm by using a Floyd-Warshall-like dynamic programming.
In this paper, we give an $O(d(m+n\log n))$-time algorithm.
While the algorithm by Planken~et~al.~is faster than the general-case algorithm only when $w=O(m^{\frac{1}{2}-\epsilon})$,
our algorithm achieves a faster running time already when $d=O(n^{1-\epsilon})$.

Both \problem{Minimum Weight Cycle} (or \problem{Girth}) and \problem{Replacement Paths} are subcubic-equivalent to \problem{APSP}~\cite{WilliamsW10}.
A naive algorithm can solve both problems in $O(n^3)$ time or $O(n(m+n\log n))$ time.
For \problem{Minimum Weight Cycle} of directed graphs, an improved $O(nm)$-time algorithm was recently obtained by Orlin and Sede\~{n}o-Noda~\cite{orlin2017onm}.
For \problem{Replacement Paths}, Malik~et~al.~\cite{malik1989k} obtained an $O(m + n \log n)$-time algorithm for undirected graphs,
and Roditty and Zwick~\cite{DBLP:journals/talg/RodittyZ12} obtained an $O(\sqrt{n} m \cdot \mathrm{polylog}\,n)$-time algorithm for unweighted graphs.
For the general case, Gotthilf and Lewenstein~\cite{DBLP:journals/ipl/GotthilfL09} obtained an $O(n(m+n\log\log n))$-time algorithm,
and there exists an $\Omega(\sqrt{n} m)$-time lower bound in the path-comparison model~\cite{karger1993finding} (whenever $m=O(n\sqrt{n})$)~\cite{hershberger2007difficulty}.
In this paper, we give an $O(d(m+n\log n))$-time algorithm for each of these problems, which is faster than the general-case algorithm already when $d=O(n^{1-\epsilon})$.
This result shows the following contrast to the known result of ``Hardness in P'':
\problem{Radius} is also subcubic-equivalent to \problem{APSP}~\cite{AbboudGW15} but it cannot be solved in a similar running time under some plausible assumptions~\cite{AbboudWW16}.


\emph{2-hop cover}~\cite{cohen2003reachability} is a data structure for efficiently answering distance queries.
Akiba~et~al.~\cite{akiba2013fast} obtained an $O(w(m+n\log n)\log n)$-time algorithm for constructing a 2-hop cover answering each distance query in $O(w\log n)$ time.
In this paper, we give an $O(d(m+n\log n))$-time algorithm for constructing a 2-hop cover answering each distance query in $O(d)$ time.

\subparagraph{Related work.}
Coudert, Ducoffe and Popa~\cite{Coudert17} have developed fully polynomial FPT algorithms using several other graph parameters including clique-width.
In contrast to the tree-depth, their parameters are not polynomially bounded by tree-width,
and therefore, their results do not imply fully polynomial FPT algorithms parameterized by tree-width.
Mertzios, Nichterlein and Niedermeier~\cite{DBLP:journals/corr/MertziosNN16} have obtained an $O(m+k^{1.5})$-time algorithm for \problem{Maximum Matching} parameterized by feedback edge number $k$ ($=m-n+1$ when the graph is connected) by giving a linear-time kernel.

\section{Preliminaries}\label{sec:pre}

Let $G=(V,E)$ be a directed or undirected graph, where $V$ is a set of vertices of $G$ and $E$ is a set of edges of $G$.
When the graph is clear from the context, we use $n$ to denote the number of vertices and $m$ to denote the number of edges.
All the graphs in this paper are simple (i.e., they have no multiple edges nor self-loops).
Let $S\subseteq V$ be a subset of vertices.
We denote by $E[S]$ the set of edges whose endpoints are both in $S$
and denote by $G[S]$ the subgraph induced by $S$ (i.e., $G[S]=(S,E[S])$).

A \emph{tree decomposition} of a graph $G=(V,E)$ is a pair $(T, B)$ of a tree $T=(X, F)$ and a collection of \emph{bags} $\{B_x\subseteq V\mid x\in X\}$ satisfying the following two conditions.
\begin{itemize}
\item For each edge $uv \in E$, there exists some $x\in X$ such that $\{u,v\}\subseteq B_x$.
\item For each vertex $v \in V$, the set $\{x\in X\mid v\in B_x\}$ induces a connected subtree in $T$.
\end{itemize}
The \emph{width} of $(T, B)$ is the maximum of $|B_x|-1$ and the \emph{tree-width} $\mathrm{tw}(G)$ of $G$ is the minimum width among all possible tree decompositions.

An \emph{elimination forest} $T$ of a graph $G=(V,E)$ is a rooted forest on the same vertex set $V$ such that, for every edge $uv\in E$, one of $u$ and $v$ is an ancestor of the other.
The \emph{depth} of $T$ is the maximum number of vertices on a path from a root to a leaf in $T$.
The \emph{tree-depth} $\mathrm{td}(G)$ of a graph $G$ is the minimum depth among all possible elimination forests.
Tree-width and tree-depth are strongly related as the following lemma shows.

\begin{lemma}[\cite{DBLP:books/daglib/0030491,Schaffer89}]\label{lem:pre:relation}
For any graph $G$, the following holds.
\[
\mathrm{tw}(G)+1\leq\mathrm{td}(G)\leq(\mathrm{tw}(G)+1)\log_2 n.
\]
Moreover, given a tree decomposition of width $k$, we can construct an elimination forest of depth $O(k\log n)$ in linear time.
\end{lemma}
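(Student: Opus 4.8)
The plan is to prove the three assertions separately, since each has a distinct flavour. For the lower bound $\tw(G)+1\le\td(G)$, I would take an elimination forest $T$ of $G$ realizing the tree-depth and turn it into a tree decomposition of width $\td(G)-1$: for each vertex $v$, let the bag $B_v$ consist of $v$ together with all of its ancestors in $T$. One then checks the two tree-decomposition axioms. The edge axiom is immediate from the defining property of an elimination forest: if $uv\in E$ then one of $u,v$, say $u$, is an ancestor of the other, so $\{u,v\}\subseteq B_v$. The connectedness axiom holds because the set of bags containing a fixed vertex $v$ is exactly $\{v\}$ together with the bags of its descendants, which forms a subtree of $T$ rooted at $v$. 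Since each root-to-leaf path in $T$ has at most $\td(G)$ vertices, every bag has size at most $\td(G)$, giving width at most $\td(G)-1$.

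For the upper bound $\td(G)\le(\tw(G)+1)\log_2 n$ together with the algorithmic ``moreover'' statement, I would argue recursively using balanced separators. Given a tree decomposition of width $k$, the standard fact is that $G$ admits a separator $S$ of size at most $k+1$ such that every connected component of $G-S$ has at most $n/2$ vertices; moreover $S$ can be found in linear time from the decomposition (it is the bag of a suitable centroid-like node). Build the elimination forest by making the vertices of $S$ a path of ancestors on top, and then recursing on each component of $G-S$, each of which inherits a tree decomposition of width at most $k$ on at most $n/2$ vertices. The depth of the resulting forest satisfies $D(n)\le (k+1)+D(n/2)$, so $D(n)\le (k+1)\log_2 n$; one must check that the output really is an elimination forest, which follows because every edge of $G$ either has an endpoint in $S$ (handled by the ancestor path, noting each vertex of $S$ is an ancestor of everything below it) or lies inside a single component (handled by induction). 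The linear-time bound follows since the work at each level is linear in the number of edges and vertices of the current subgraph, and the subgraphs at a fixed recursion level are vertex-disjoint.

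The part I expect to require the most care is the efficient construction of the balanced separator and the bookkeeping that keeps the recursion linear-time: naively re-deriving a separator at each node could cost more than $O(m)$ overall, so one wants to exploit the structure of the given tree decomposition (for instance, first making it nice or binary in linear time, then using a centroid-decomposition argument on the tree $T$) so that the separators at all recursive calls are read off in total linear time. I would also need to be slightly careful that $\log_2 n$ is the right constant and that small base cases ($n\le 1$, or components that are single vertices) do not break the recurrence. Since the lemma is attributed to \cite{DBLP:books/daglib/0030491,Schaffer89}, I would keep this argument brief and cite those sources for the detailed linear-time implementation, presenting the recursion above as the conceptual core.
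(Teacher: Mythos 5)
The paper does not prove this lemma; it is stated as a known result and attributed to the cited references, so there is no ``paper's own proof'' to compare against. Your sketch reproduces the standard textbook argument, and it is essentially correct at the level of detail you give.

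The lower-bound direction is complete: using the elimination forest itself as the decomposition tree (after connecting the roots so it becomes a single tree) and setting $B_v = \{v\}\cup\{\text{ancestors of }v\}$ satisfies both axioms exactly as you say, and every bag has size at most the depth, giving $\tw(G)\le\td(G)-1$. For the upper-bound and algorithmic direction, the separator recursion is the right idea, and you correctly flag the two places that need care. One concrete remark on the first: the recurrence $D(n)\le(k+1)+D(n/2)$ with $D(1)=1$ gives $D(n)\le(k+1)\log_2 n+1$, not $(k+1)\log_2 n$, so getting the clean constant in the middle inequality requires either a slightly stronger separator bound (e.g.\ components of size at most $(n-|S|)/2$, or $|S|$ strictly smaller when the graph is small) or a more careful induction; the $O(k\log n)$ form of the algorithmic claim is unaffected. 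On the second, your instinct that recomputing a centroid separator naively at each call would exceed linear time is right, and the usual remedy is exactly what you suggest: first compress the given decomposition to one with $O(n)$ bags, then centroid-decompose the decomposition tree itself (so that the number of bags, not the number of graph vertices, halves at each level), reading off each bag in order along the centroid decomposition. Since the paper defers to the references for precisely these implementation details, keeping your argument at this level and citing them is appropriate.
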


\section{Divide-and-conquer framework}\label{sec:framework}
In this section, we propose a divide-and-conquer framework that can be applicable to a wide range of problems parameterized by tree-depth.

\begin{theorem}\label{thm:framework}
Let $G=(V,E)$ be a graph and let $f$ be a function defined on subsets of $V$.
Suppose that $f(\emptyset)$ can be computed in a constant time and we have the following two algorithms $\textsc{Increment}$ and $\textsc{Union}$
with time complexity $T(n,m)(=\Omega(n+m))$.
\begin{itemize}
  \item $\textsc{Increment}(X,f(X),x)\mapsto f(X\cup\{x\})$. Given a set $X\subseteq V$, its value $f(X)$, and a vertex $x\not\in X$,
  		this algorithm computes the value $f(X\cup\{x\})$ in $T(|X\cup\{x\}|,|E[X\cup\{x\}]|)$ time.
  \item $\textsc{Union}((X_1,f(X_1)),\ldots,(X_c,f(X_c)))\mapsto f(\bigcup_i X_i)$. Given disjoint sets $X_1,\ldots,X_c\subseteq V$ such that $G$ has no edges between $X_i$ and $X_j$ for any $i\neq j$,
  		and their values $f(X_1),\ldots,f(X_c)$,
  		this algorithm computes the value $f(\bigcup_i X_i)$ in $T(|\bigcup_i X_i|,|E[\bigcup_i X]|)$ time.
\end{itemize}
Then, for a given elimination forest of $G$ of depth $k$, we can compute the value $f(V)$ in $O(k\cdot T(n, m))$ time.
\end{theorem}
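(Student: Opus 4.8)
The plan is to recurse on the given elimination forest $T$. For a vertex $v$, write $T_v$ for the subtree of $T$ rooted at $v$ and $V_v := V(T_v)$. First I would record two structural facts, both immediate from the defining property of an elimination forest together with the observation that two incomparable nodes of a rooted forest have disjoint subtrees (a common node would put both on its root-path, which is a chain): (i) if $v$ has children $c_1,\dots,c_t$ in $T$, then $V_v\setminus\{v\}=\bigcup_i V_{c_i}$ and $G$ has no edge between $V_{c_i}$ and $V_{c_j}$ for $i\neq j$; (ii) more generally, if $v\neq v'$ have the same depth in $T$ then $V_v$ and $V_{v'}$ are disjoint and $G$ has no edge between them. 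For (ii): an edge joining $V_v$ to $V_{v'}$ would, by the elimination-forest property, make one endpoint an ancestor of the other; following that endpoint's root-path (which passes through $v$ or $v'$) then forces $V_v\cap V_{v'}\neq\emptyset$, contradicting the incomparability of $v$ and $v'$. Fact (i) is the special case of (ii) for the children of a common node.

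Given (i) the algorithm is forced: to compute $f(V_v)$, recursively compute $f(V_{c_i})$ for each child $c_i$ of $v$, call $\textsc{Union}$ on $(V_{c_1},f(V_{c_1})),\dots,(V_{c_t},f(V_{c_t}))$ — a legal call by (i) — to obtain $f(V_v\setminus\{v\})$, and then call $\textsc{Increment}(V_v\setminus\{v\},\,f(V_v\setminus\{v\}),\,v)$ to obtain $f(V_v)$; a leaf is the base case $f(\{v\})=\textsc{Increment}(\emptyset,f(\emptyset),v)$. If $T$ has roots $r_1,\dots,r_c$, a final $\textsc{Union}$ over the $f(V_{r_j})$ (again legal, the roots being pairwise incomparable) yields $f(V)$. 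The sets $V_v$ are computed once, bottom-up, in $O(n)$ total time. Correctness is a straightforward induction on $T$ using the specifications of $\textsc{Increment}$ and $\textsc{Union}$.

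For the running time, the work charged to node $v$ is one $\textsc{Union}$ on disjoint sets of total size $|V_v|-1$ with at most $|E[V_v]|$ internal edges, one $\textsc{Increment}$ adding $v$ to $V_v\setminus\{v\}$, and $O(|V_v|)$ bookkeeping; using $T(n,m)=\Omega(n+m)$ and monotonicity of $T$ this is $O(T(|V_v|,|E[V_v]|))$. Summing over all nodes and grouping by depth $d=1,\dots,k$ gives a total of $O\!\left(\sum_{d=1}^{k}\sum_{v:\operatorname{depth}(v)=d}T(|V_v|,|E[V_v]|)\right)$. By fact (ii), for each fixed $d$ the sets $V_v$ — and hence the edge sets $E[V_v]$ — are pairwise disjoint, so $\sum_{v:\operatorname{depth}(v)=d}|V_v|\le n$ and $\sum_{v:\operatorname{depth}(v)=d}|E[V_v]|\le m$; since $T$ is (up to constants) superadditive, the inner sum is $O(T(n,m))$, and the total is $O(k\cdot T(n,m))$.

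The one step I expect to need care is precisely that last estimate — bounding a sum of $T$ over disjoint pieces by $O(T(n,m))$ — which uses that $T$ is, up to constant factors, nondecreasing and superadditive. This holds for every $T$ appearing in our applications ($O(m)$ or $O(m+n\log n)$), and for those forms one can bypass superadditivity entirely by a direct charging argument: each vertex lies in $V_v$ for at most $k$ nodes $v$ (itself and its ancestors), and each edge $xy$ lies in $E[V_v]$ for at most $k$ nodes $v$ (whichever of $x,y$ is higher in $T$, and its ancestors), so both the additive and the $n\log n$ contributions telescope to $k$ times their global value. Everything else is a routine induction and accounting.
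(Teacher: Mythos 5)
Your proof is correct and follows essentially the same algorithm and analysis as the paper: both recurse on the elimination forest, incrementing by the root at each step and unioning disjoint pieces, and both rely on superadditivity of $T$ to bound the total work. The one thing you do that the paper does not is make that superadditivity assumption explicit (the paper silently uses $\sum_i T(|X_i|,|E[X_i]|)\leq T(|S|,|E[S]|)$ in its inductive step), and you also give a clean alternative charging argument (each vertex/edge is charged to at most $k$ subtrees) that avoids superadditivity for the specific forms $T=O(m)$ and $T=O(m+n\log n)$ used in the applications; your depth-level accounting versus the paper's per-recursion-level inductive charge are two presentations of the same bound.
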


\begin{algorithm}[t!]
\caption{Algorithm for computing $f(V)$.}
\label{alg:framework}
\begin{algorithmic}[1]
\Procedure{Compute}{$S, T_S$} $\mapsto f(S)$\Comment{$T_S$ is an elimination forest of $G[S]$.}
\If{$S=\emptyset$}
	\Return $f(\emptyset)$
\EndIf
\State $T_1,\ldots,T_c\gets$ the connected trees of $T_S$
\State $X_1,\ldots,X_c\gets$ the sets of vertices of $T_1,\ldots,T_c$
\For{$i\in\{1,\ldots,c\}$}
	\State $x_i\gets$ the root of $T_i$
	\State $f_i\gets\textsc{Increment}(X_i\setminus\{x_i\}, \textsc{Compute}(X\setminus\{x_i\}, T_i-x_i), x_i)$\label{line:framework:rec}
\EndFor
\State\Return $\textsc{Union}((X_1,f_1),\ldots,(X_c,f_c))$
\EndProcedure
\end{algorithmic}
\end{algorithm}

\begin{proof}
Algorithm~\ref{alg:framework} describes our divide-and-conquer algorithm.
We prove that for any set $S$ and any elimination forest $T_S$ of $G[S]$ of depth $k_S$,
the procedure $\textsc{Compute}(S,T_S)$ correctly computes the value $f(S)$ in $(2k_S+1)\cdot T(|S|,|E[S]|)$ time by induction on the size of $S$.

The claim trivially holds when $S=\emptyset$.
For a set $S\neq\emptyset$, let $T_1,\ldots,T_c$ be the connected trees of $T_S$ ($c=1$ if $T_S$ is connected).
For each $i$, let $X_i$ be the set of vertices of $T_i$.
From the definition of the elimination forest, $G$ has no edges between $X_i$ and $X_j$ for any $i\neq j$.
For each $i$, we compute the value $f(X_i)$ as follows.
Let $x_i$ be the root of $T_i$.
By removing $x_i$ from $T_i$, we obtain an elimination forest of $G[X_i\setminus\{x_i\}]$ of depth at most $k_S-1$.
Therefore, by the induction hypothesis, we can correctly compute the value $f(X_i\setminus\{x_i\})$ in $(2k_S-1)\cdot T(|X_i|,|E[X_i]|)$ time.
Then, by applying $\textsc{Increment}(X_i\setminus\{x_i\}, f(X_i\setminus\{x_i\}), x_i)$, we obtain the value $f(X_i)$ in $2k_S\cdot T(|X_i|,|E[X_i]|)$ time.
Because $|S|=\sum_i |X_i|$ and $|E[S]|=\sum_i |E[X_i]|$ hold, the total running time of these computations is
$2k_S\cdot \sum_i T(|X_i|,|E[X_i]|)\leq 2k_S\cdot T(|S|,|E[S]|)$.
Finally, by applying the algorithm $\textsc{Union}$, we obtain the value $f(S)$ in $(2k_S+1)\cdot T(|S|,|E[S]|)$ time.
\end{proof}

Note that the algorithm \textsc{Union} is trivial in most applications.
We have only one non-trivial case in Section~\ref{sec:replacement} in this paper.
From the relation between tree-depth and tree-width (Lemma~\ref{lem:pre:relation}), we obtain the following corollary.
\begin{corollary}
Under the same assumption as in Theorem~\ref{thm:framework}, for a given tree decomposition of $G$ of width $k$, we can compute the value $f(V)$ in $O(k\cdot T(n, m)\log n)$ time.
\end{corollary}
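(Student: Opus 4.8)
The plan is to derive this corollary directly from Theorem~\ref{thm:framework} together with the conversion guarantee in Lemma~\ref{lem:pre:relation}. First I would take the given tree decomposition of width $k$ and invoke the ``moreover'' part of Lemma~\ref{lem:pre:relation}, which produces in linear time an elimination forest of $G$ of depth $O(k\log n)$. Since $T(n,m)=\Omega(n+m)$ by assumption, the $O(n+m)$ cost of this preprocessing step is absorbed into a single additive factor $T(n,m)$ and therefore does not affect the claimed asymptotic bound.

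Next I would feed this elimination forest into Theorem~\ref{thm:framework}. That theorem computes $f(V)$ in $O(k'\cdot T(n,m))$ time when supplied with an elimination forest of depth $k'$, and its correctness does not depend on how the elimination forest was obtained; substituting $k'=O(k\log n)$ yields total running time $O(k\cdot T(n,m)\log n)$, as desired.

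There is essentially no real obstacle here; the only points to check are the two bookkeeping items just mentioned — that the linear-time conversion is dominated by $T(n,m)$, and that replacing the forest depth $k'$ by $O(k\log n)$ is the only change that propagates into the running time. One subtlety worth stating explicitly is that $n$ and $m$ in the final bound refer to the whole graph $G$: Theorem~\ref{thm:framework} already manages the recursion internally and only the top-level parameters $n$, $m$, and the depth of the supplied elimination forest enter its running-time estimate, so no separate argument about the sizes of induced subgraphs is needed.
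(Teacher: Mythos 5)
Your proof is correct and matches the paper's intended argument: the paper derives the corollary directly from Lemma~\ref{lem:pre:relation} (convert the width-$k$ tree decomposition to an elimination forest of depth $O(k\log n)$ in linear time) followed by an application of Theorem~\ref{thm:framework}. The bookkeeping observations you add (linear-time conversion absorbed by $T(n,m)=\Omega(n+m)$, and that only the forest depth changes in the bound) are accurate and align with what the paper leaves implicit.
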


\section{Applications}

\subsection{Maximum matching}
For an undirected graph $G=(V,E)$, a \emph{matching} $M$ of $G$ is a subset of $E$ such that no edges in $M$ share a vertex.
In this section, we prove the following theorem.
\begin{theorem}\label{thm:matching}
Given an undirected graph and its elimination forest of depth $k$, we can compute a maximum-size matching in $O(k m)$ time.
\end{theorem}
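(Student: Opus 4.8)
The plan is to instantiate the divide-and-conquer framework of Theorem~\ref{thm:framework} with $f(X)$ defined as a maximum-size matching of the induced subgraph $G[X]$ (or, to be careful about what ``value'' means, the matching itself together with enough auxiliary information, such as which vertices it saturates). The key observation making matching amenable to this framework is that classical matching algorithms already work \emph{incrementally}: a maximum matching of $G[X\cup\{x\}]$ is either a maximum matching of $G[X]$ (if $x$ cannot be added to any augmenting structure) or has exactly one more edge, obtained by finding a single augmenting path starting at the newly exposed vertex $x$. So the \textsc{Increment} step is: take $M=f(X)$, and search for an $M$-augmenting path with $x$ as one endpoint; if one is found, augment along it to get $f(X\cup\{x\})$, otherwise return $M$ unchanged. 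The \textsc{Union} step is trivial here: since there are no edges between the parts $X_i$, the disjoint union of maximum matchings of the $G[X_i]$ is a maximum matching of $G[\bigcup_i X_i]$, so \textsc{Union} just concatenates.

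The main technical point is to argue correctness of \textsc{Increment}: namely that if $M$ is a maximum matching of $G[X]$ and $M$ is \emph{not} maximum in $G[X\cup\{x\}]$, then there is an augmenting path in $G[X\cup\{x\}]$ with endpoint $x$. This follows from the Berge augmenting-path characterization together with the fact that $M$ is already maximum in $G[X]$: any $M$-augmenting path $P$ in $G[X\cup\{x\}]$ must use the vertex $x$ (otherwise $P$ lies entirely in $G[X]$, contradicting maximality of $M$ there), and since $x\notin V(M)$, the vertex $x$ is exposed, hence is an endpoint of $P$. Symmetrically, in the undirected case one should note $x$ is an endpoint (the other endpoint is some other exposed vertex, possibly $x$'s own neighbor, but at any rate the search can be rooted at $x$). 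Thus a single augmenting-path search rooted at $x$ suffices, and if it fails, $M$ remains maximum.

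For the running time, a single augmenting-path search in a general graph runs in $O(m')$ time on a graph with $m'$ edges via Edmonds' blossom-shrinking algorithm (with the standard near-linear-time implementation, e.g.\ Gabow's or the union-find based blossom contraction), so $\textsc{Increment}(X,f(X),x)$ runs in $O(|E[X\cup\{x\}]|)=O(m')$ time, matching the requirement $T(n,m)=O(m)$ of the framework; \textsc{Union} is clearly $O(n+m)$. Plugging $T(n,m)=O(n+m)=O(m)$ (we may assume $m=\Omega(n)$, or carry the $n$ term) into Theorem~\ref{thm:framework} with elimination forest depth $k$ yields total time $O(k\cdot T(n,m))=O(km)$, as claimed.

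The part I expect to require the most care is not the combinatorics but citing/assembling the right building block for \textsc{Increment}: we need a single-source blossom augmentation routine that genuinely runs in $O(m)$ time per phase (the textbook $O(nm)$ blossom algorithm does one augmentation in $O(m)$ already, so this is fine — but one should state it cleanly, perhaps attributing the linear-time-per-augmentation blossom search to Gabow or to the standard presentation), and to make sure the ``value'' $f(X)$ we pass around (the matching edge set restricted to $X$) is exactly what the next \textsc{Increment} needs as input. Once that bookkeeping is pinned down, the theorem is immediate from Theorem~\ref{thm:framework}. I would therefore structure the proof as: (1) define $f$; (2) describe \textsc{Increment} and prove its correctness via Berge's theorem and maximality of $M$ in $G[X]$; (3) observe \textsc{Union} is trivial; (4) bound the time of each and invoke Theorem~\ref{thm:framework}.
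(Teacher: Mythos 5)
Your proposal is correct and takes essentially the same approach as the paper: define $f(S)$ as a maximum matching of $G[S]$, implement \textsc{Increment} by a single $O(m)$-time blossom-based augmenting-path search (the paper invokes the Gabow--Tarjan single-phase lemma as a black box, while you additionally observe the search may be rooted at the newly exposed $x$), take trivial disjoint unions for \textsc{Union}, and plug $T(n,m)=O(n+m)$ into Theorem~\ref{thm:framework}.
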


As mentioned in the introduction, we use the augmenting-path approach, which is also used for planar matching~\cite{LiptonT80}.
Let $M$ be a matching.
A vertex not incident to $M$ is called \emph{exposed}.
An \emph{$M$-alternating path} is a (simple) path whose edges are alternately out of and in $M$.
An $M$-alternating path connecting two different exposed vertices is called an \emph{$M$-augmenting path}.
If there exists an $M$-augmenting path $P$, by taking the symmetric difference $M\Delta E(P)$, where $E(P)$ is the set of edges in $P$, we can construct a matching of size $|M|+1$.
In fact, $M$ is the maximum-size matching if and only if there exist no $M$-augmenting paths.
Edmonds~\cite{edmonds1965paths} developed the first polynomial-time algorithm for computing an $M$-augmenting path by introducing the notion of blossom,
and an $O(m)$-time algorithm was given by Gabow and Tarjan~\cite{DBLP:journals/jcss/GabowT85}.
\begin{lemma}[\cite{DBLP:journals/jcss/GabowT85}]\label{lem:matching:augment}
Given an undirected graph and its matching $M$, we can either compute a matching of size $|M|+1$ or correctly conclude that $M$ is a maximum-size matching in $O(m)$ time.
\end{lemma}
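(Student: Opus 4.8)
The plan is to reprove this (cited) lemma by running Edmonds' blossom algorithm as a single search phase and then pointing to the data-structure ingredients that reduce its cost to $O(m)$. We grow an \emph{alternating forest}: initially each exposed vertex is the root of a trivial tree, and a vertex is labeled \emph{outer} if its tree-distance to its root is even (roots included) and \emph{inner} if it is odd; only edges incident to outer vertices drive the search. Repeatedly take an unscanned edge $uv$ with $u$ outer and branch on $v$: (i) if $v$ is unlabeled it must be matched (an unlabeled vertex adjacent to an outer vertex cannot be exposed, since an exposed vertex is a root and hence outer), so add $v$ as an inner child of $u$ and its mate $w$ as an outer child of $v$; (ii) if $v$ is outer and in a different tree, the alternating paths from $u$ and from $v$ to their roots plus the edge $uv$ form an $M$-augmenting path — output $M\Delta E(P)$ to get a matching of size $|M|+1$; (iii) if $v$ is outer and in the same tree, the two tree-paths from $u$ and $v$ up to their lowest common ancestor $b$ (necessarily outer) together with $uv$ form a \emph{blossom}, which we \emph{contract}; (iv) if $v$ is inner, do nothing. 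If the edge list is exhausted with no augmenting path, Edmonds' theorem certifies that $M$ is maximum, so the algorithm is correct.

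Everything except blossom handling is plainly $O(m)$: each vertex is labeled at most once in case (i), and each edge is examined $O(1)$ times per endpoint. The delicate point — and the reason the $O(m)$ bound is nontrivial — is that contracting a blossom relabels every inner vertex on the two blossom paths as outer, which may expose new edges that now have to be scanned, and that locating the base $b$ and walking the blossom paths must be charged so as not to exceed the number of vertices newly turned outer. The standard device is to maintain the current blossoms as a disjoint-set structure in which \textsc{Find} returns the \emph{base} of a vertex's blossom and each contraction is a few \textsc{Union} operations; because a vertex turns from inner to outer at most once over the whole search, the total number of edges rescanned after contractions is $O(m)$ and the total number of set operations is $O(m)$. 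The lowest common ancestor $b$ inside a tree is found by walking the two paths from $u$ and $v$ and marking vertices until a marked one repeats, which again is paid for by those same inner-to-outer relabelings.

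The one remaining obstacle — and the technical heart attributed to Gabow and Tarjan~\cite{DBLP:journals/jcss/GabowT85} — is that a generic union-find would yield $O(m\,\alpha(m,n))$ rather than $O(m)$. Here one exploits that the \textsc{Union} operations are not arbitrary: blossoms nest, so the sequence of unions is governed by a forest (the ``incremental / static tree set union'' setting), for which there is a data structure performing $O(n)$ unions and $O(m)$ finds in $O(n+m)$ time on a RAM. Plugging this in, the whole single-phase search runs in $O(m)$ time, which is the claim; the substantive work is arranging Edmonds' search so that each edge and each vertex is touched a constant number of times, while the linear-time set-union structure itself can be invoked as a black box.
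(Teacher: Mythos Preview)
The paper does not prove this lemma; it is stated with a citation to Gabow and Tarjan and then used as a black box, so there is no ``paper's own proof'' to compare against. What you have written is a faithful high-level reconstruction of the Gabow--Tarjan single-phase blossom search: the alternating-forest growth with outer/inner labels, the four cases for a scanned edge, contraction of blossoms via a set-union structure whose \textsc{Find} returns the current base, and --- crucially --- the observation that the union pattern here is the ``static/incremental tree'' special case for which the cited paper supplies a genuinely linear-time union--find, removing the inverse-Ackermann factor that a generic structure would incur.

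The sketch is correct at this level of detail. The two places where you are most telegraphic are (a) the amortized charging of the LCA walks (strictly, one charges each step of the walk, taken through current bases via \textsc{Find}, to a forest edge that is about to be collapsed by a \textsc{Union}, giving $O(n)$ total walk steps plus the \textsc{Find} cost), and (b) the precise sense in which the sequence of blossom contractions fits the Gabow--Tarjan tree-union model (the unions run along edges of the incrementally grown alternating forest). Both are standard and you point to the right ingredients, so there is no real gap --- you are simply supplying an argument that the paper itself chose to import by citation.
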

For $S\subseteq V$, we define $f(S)$ as a function that returns a maximum-size matching of $G[S]$.
We now give algorithms \textsc{Increment} and \textsc{Union}.

\subsubsection*{$\textsc{Increment}(X,f(X),x)$.}
Because the size of the maximum matching of $G[X\cup\{x\}]$ is at most the size of the maximum matching of $G[X]$ plus one,
we can compute a maximum matching of $G[X\cup\{x\}]$ in $O(|E[X\cup\{x\}]|)$ time by a single application of Lemma~\ref{lem:matching:augment}.

\subsubsection*{$\textsc{Union}((X_1,f(X_1)),\ldots,(X_c,f(X_c)))$.}
Because there exist no edges between $X_i$ and $X_j$ for any $i\neq j$, we can construct a maximum matching of $G[\bigcup_i X_i]$ just by taking the union of $f(X_i)$.

\begin{proof}[Proof of Theorem~\ref{thm:matching}]
The algorithm $\textsc{Increment}(X,f(X),x)$ correctly computes $f(X\cup\{x\})$ in $O(|E[X\cup\{x\}]|)$ time and
the algorithm $\textsc{Union}((X_1,f(X_1)),\ldots,(X_c,f(X_c)))$ correctly computes $f(\bigcup_i X_i)$ in $O(|\bigcup_i X_i|)$ time.
Therefore, from Theorem~\ref{thm:framework}, we can compute a maximum-size matching of $G$ in $O(k m)$ time.
\end{proof}

\subsection{Weighted matching}
Let $G=(V,E)$ be an undirected graph with an edge-weight function $w:E\to\R$.
A weight of a matching $M$, denoted by $w(M)$, is simply defined as the total weight of edges in $M$.
A matching $M$ of $G$ is called \emph{perfect} if $G$ has no exposed vertices (or equivalently $|M|=\frac{n}{2}$).
A perfect matching is called a \emph{maximum-weight perfect matching} if it has the maximum weight among all perfect matchings of $G$.
We can easily see that other variants of weighted matching problems can be reduced to the problem of finding a maximum-weight perfect matching
even when parameterized by tree-depth (see Appendix~\ref{sec:reductions}).
In this section, we prove the following theorem.
\begin{theorem}\label{thm:wmatching}
Given an edge-weighted undirected graph admitting at least one perfect matching and its elimination forest of depth $k$, we can compute a maximum-weight perfect matching in $O(k (m+n\log n))$ time.
\end{theorem}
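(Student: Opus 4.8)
The plan is to instantiate the divide-and-conquer framework of Theorem~\ref{thm:framework} with Edmonds' primal--dual weighted matching algorithm, run incrementally over the vertices. The first step is a reduction that makes $f$ well defined on \emph{every} induced subgraph, including those without a perfect matching: fix $C>2\sum_{e\in E}|w(e)|$ and replace $w$ by $w'(e)=w(e)+C$. Then for every $S\subseteq V$ a matching of $G[S]$ of maximum $w'$-weight automatically has maximum cardinality, so in particular a maximum-$w'$-weight matching of $G=G[V]$ is a perfect matching (one exists by assumption); and among perfect matchings the $w'$-weight and $w$-weight differ by the fixed amount $C\cdot\frac n2$, so it is also a maximum-$w$-weight perfect matching. (Alternatively one may keep $w$ and optimize the lexicographic objective (cardinality, weight); nothing below changes.) Accordingly I would define $f(S)$ to be a maximum-$w'$-weight matching $M_S$ of $G[S]$ together with an optimal dual solution $(y_S,z_S)$ of the LP relaxation, that is, vertex potentials and nonnegative blossom values that are feasible on all edges of $G[S]$, tight on $M_S$, and satisfy the usual complementary-slackness conditions (including $y_v=0$ at exposed vertices). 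The value $f(\emptyset)$ is the empty matching with empty dual, computable in constant time.

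The algorithm \textsc{Union} is trivial here: since $G$ has no edges between $X_i$ and $X_j$ for $i\neq j$, the union $\bigcup_i M_{X_i}$ is a matching of $G[\bigcup_i X_i]$, the union of the dual solutions is feasible (feasibility is an edge-local condition and every blossom lies inside a single part) and still satisfies complementary slackness, hence the union is again an optimal primal--dual pair; this takes $O(n+m)$ time, which is within $T(n,m)$.

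For \textsc{Increment}$(X,f(X),x)$ I would run one more ``phase'' of the primal--dual algorithm. First extend the dual to the new vertex by setting $y_x:=\max\bigl(0,\ \max_{u:\,ux\in E}\bigl(w'(ux)-y_u\bigr)\bigr)$; together with $M_X$, now regarded as a matching of $G[X\cup\{x\}]$ with $x$ exposed, this is a legal intermediate state of Edmonds' algorithm at the start of a phase (the dual is feasible, the only new edges are incident to $x$ and $x$ lies in no blossom, complementary slackness on $M_X$ is unaffected, and a root of the Hungarian forest is allowed to carry a positive potential). Because a maximum-$w'$-weight matching of $G[X\cup\{x\}]$ has cardinality equal to the ordinary matching number of $G[X\cup\{x\}]$, which exceeds $|M_X|$ (the matching number of $G[X]$) by at most one, the algorithm started from this state performs at most one augmentation before reaching optimality; equivalently, at most two Hungarian-search phases suffice, after which $M_X$ has been transformed into $M_{X\cup\{x\}}$ together with an accompanying optimal dual. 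Using Fibonacci heaps to schedule the dual adjustments together with the standard blossom data structures, a single Hungarian-search phase on a graph with $n'$ vertices and $m'$ edges runs in $O(m'+n'\log n')$ time --- this is exactly the per-phase cost behind the known $O(n(m+n\log n))$-time weighted matching algorithm~\cite{blum1990new} --- and reinitializing the structures after inserting $x$ and its incident edges costs only $O(m'+n')$. Hence \textsc{Increment} runs in $T(n,m)=O(m+n\log n)$ time on $G[X\cup\{x\}]$. Plugging these two routines into Theorem~\ref{thm:framework} with $T(n,m)=O(m+n\log n)$ computes $f(V)$, and thus a maximum-$w'$-weight (equivalently, maximum-$w$-weight) perfect matching of $G$, in $O(k(m+n\log n))$ time.

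The main obstacle is the correctness argument for \textsc{Increment}, which reduces to two points that must be argued carefully: that adjoining $x$ with the single potential $y_x$ yields a genuinely legal state of the primal--dual algorithm, so that running the algorithm from there terminates with an optimal primal--dual pair for $G[X\cup\{x\}]$; and that the number of augmentations performed is then bounded by the increase of the matching number, which the weight shift pins to be at most one (in particular there is no $M_X$-augmenting path inside $G[X]$, so every augmenting path encountered must pass through $x$). A secondary, but standard, technical point is the $O(m+n\log n)$ implementation of one phase of the weighted algorithm, including the data-structure reinitialization after the new vertex and edges are inserted.
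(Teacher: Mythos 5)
Your proposal is correct in its main lines and reaches the same $O(k(m+n\log n))$ bound by the same high-level plan: plug one Hungarian-search phase of a primal--dual weighted matching algorithm into Theorem~\ref{thm:framework}, using the fact that adding a single vertex increases the matching number by at most one. The difference is the choice of LP formulation, and the paper's choice makes the argument shorter. The paper works with the \emph{maximum-weight perfect matching} dual conditions (\ref{cond:wmatching:1})--(\ref{cond:wmatching:3}) --- feasibility of $\widehat{yz}$, tightness on $M$, and blossom fullness --- which place no constraint on $y$ at exposed vertices. The invariant carried for $f(S)$ is therefore just a maximum-\emph{size} matching of $G[S]$ together with any duals satisfying (\ref{cond:wmatching:1})--(\ref{cond:wmatching:3}); for non-perfect $M_S$ this certifies nothing about weight, and that is fine. \textsc{Increment} becomes one line: pick any $W$ with $W + y_X(v) \geq w(xv)$ for all $xv$, set $y(x):=W$, and apply Gabow's search phase (Lemma~\ref{lem:wmatching:augment}) once. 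No weight shift is needed, and there is no concern about the phase having to drive $y_x$ back to $0$ when $x$ ends up unmatchable. Your version, built on the maximum-weight \emph{matching} LP (which requires $y_v=0$ at exposed vertices, enforced via the $C$-shift so that maximum weight coincides with maximum cardinality), carries a strictly stronger invariant --- the duals certify optimality of every intermediate $M_S$ --- and therefore needs the extra argument you flag yourself: that after inserting $x$ with possibly positive potential, the search phase terminates with complementary slackness restored at exposed vertices. That does hold for the standard algorithm, so your route is sound, but it is precisely the technical burden the paper's weaker, perfect-matching-style invariant sidesteps; optimality is only certified at the very end, when $f(V)$ is perfect and (\ref{cond:wmatching:1})--(\ref{cond:wmatching:3}) alone suffice.
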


In our algorithm, we use an $O(n(m+n\log n))$-time primal-dual algorithm by Gabow~\cite{DBLP:conf/soda/Gabow90}.
In this primal-dual algorithm, we keep a pair of a matching $M$ and dual variables $(\Omega, y, z)$, where $\Omega$ is a laminar\footnote{%
A collection $\Omega$ of subsets of a ground set $V$ is called \emph{laminar} if for any $X,Y\in\Omega$, one of $X\cap Y=\emptyset$, $X\subseteq Y$, or $X\subseteq Y$ holds.
When $\Omega$ is laminar, we have $|\Omega|=O(|V|)$.
} collection of odd-size subsets of $V$
and $y$ and $z$ are functions $y:V\to\R$ and $z:\Omega\to\R_{\geq 0}$, satisfying the following conditions:
\begin{eqnarray}
\widehat{yz}(uv):=y(u)+y(v)+\sum_{B\in\Omega:u,v\in B}z(B)\geq w(uv)&&\text{for every } uv\in E,\label{cond:wmatching:1}\\
\widehat{yz}(uv)=w(uv)&&\text{for every } uv\in M,\label{cond:wmatching:2}\\
|\{uv\in M\mid u,v\in B\}|=\left\lfloor\frac{|B|}{2}\right\rfloor&&\text{for every }B\in\Omega.\label{cond:wmatching:3}
\end{eqnarray}
From the duality theory (see e.g.~\cite{DBLP:journals/corr/Gabow16a}), a perfect matching $M$ is a maximum-weight perfect matching if and only if there exist dual variables $(\Omega, y, z)$ satisfying the above conditions.
Gabow~\cite{DBLP:conf/soda/Gabow90} obtained the $O(n(m+n\log n))$-time algorithm by iteratively applying the following lemma.

\begin{lemma}[\cite{DBLP:conf/soda/Gabow90}]\label{lem:wmatching:augment}
Given an edge-weighted undirected graph and a pair of a matching $M$ and dual variables $(\Omega, y, z)$ satisfying the conditions (\ref{cond:wmatching:1})--(\ref{cond:wmatching:3}),
we can either compute a pair of a matching $M'$ of cardinality $|M|+1$ and dual variables $(\Omega', y', z')$ satisfying the conditions (\ref{cond:wmatching:1})--(\ref{cond:wmatching:3})
or correctly conclude that $M$ is a maximum-size matching\footnote{%
Note that when $M$ is not a perfect matching, this does not imply that $M$ has the maximum weight among all the maximum-size matchings.
}
in $O(m+n\log n)$ time.
\end{lemma}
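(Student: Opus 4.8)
The plan is to reconstruct one \emph{Hungarian phase} of Gabow's primal--dual weighted-matching algorithm~\cite{DBLP:conf/soda/Gabow90} and argue that it runs in $O(m+n\log n)$ time. Starting from $M$ and feasible duals $(\Omega,y,z)$, contract every maximal set of $\Omega$ (treating an unblossomed vertex as a trivial blossom) so that the working picture is a graph of blossoms, and grow alternating trees rooted at the exposed blossoms: a blossom reached by an even-length alternating path from a root is \emph{outer}, one reached by a matching edge from an outer blossom is \emph{inner}, and the rest are \emph{free}. The phase repeatedly performs a \emph{dual adjustment} governed by a parameter $\delta\ge 0$: decrease $y$ by $\delta$ on the vertices of outer blossoms, increase $y$ by $\delta$ on the vertices of inner blossoms, increase $z$ by $2\delta$ on each maximal outer blossom of $\Omega$ and decrease $z$ by $2\delta$ on each maximal inner one, leaving free blossoms untouched. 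By construction this keeps (\ref{cond:wmatching:2}) and (\ref{cond:wmatching:3}) intact for all currently tight edges and all blossoms, and it preserves $z\ge 0$ as long as $\delta$ does not exceed $z(B)/2$ for any maximal inner $B$. The slack $\widehat{yz}(uv)-w(uv)$ of each edge moves linearly in $\delta$, so the largest admissible step is the minimum of: the slack of an (outer, free) edge (a \emph{grow} event), half the slack of an (outer, outer) edge inside one tree (a \emph{shrink} event), half the slack of an (outer, outer) edge joining two trees (an \emph{augment} event), and $z(B)/2$ over maximal inner blossoms $B$ (an \emph{expand} event). We advance $\delta$ to that minimum, commit the dual change, and process the triggering event: grow extends a tree by two blossoms; shrink contracts an odd alternating cycle of blossoms into a new member of $\Omega$ whose base inherits the outer label; expand dissolves an inner blossom, relabelling its even alternating subpath and freeing the rest; augment flips the tight alternating path between the two exposed roots. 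Since every edge on that path is tight and the flip preserves blossom fullness, keeping the duals unchanged yields $(\Omega',y',z')$ still satisfying (\ref{cond:wmatching:1})--(\ref{cond:wmatching:3}) with $|M'|=|M|+1$. If at some point no event has a finite threshold, i.e. $\delta$ may be increased without bound, the Hungarian search cannot be continued, which by the standard theory of the blossom algorithm certifies that no $M$-augmenting path exists, and we report that $M$ is a maximum-size matching. A standard amortized argument bounds the number of events in a phase by $O(n)$.

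For the time bound, the two costs to control are scanning edges and locating the next event. An edge is scanned only when an endpoint first becomes outer, and a standard amortized analysis shows each edge is handled a constant number of times per phase, so all scanning costs $O(m)$. To avoid recomputing slacks at each adjustment, $y$ and $z$ are stored implicitly: maintain running totals of the outer and inner dual shifts and, for each blossom, the value of the relevant total when it last acquired its label; then any $y(u)$ or $z(B)$ is recovered in $O(1)$, and the accumulated offset is ``cashed out'' into the stored values only when a blossom changes label or is expanded, at $O(1)$ amortized cost per blossom. The candidate events are kept in a Fibonacci heap keyed against a global clock $\sum\delta$: for grow events each free blossom stores the smallest adjusted slack of an edge to an outer blossom, updated by \textsc{decrease-key} whenever a newly outer blossom is scanned; for shrink and augment events each scanned (outer, outer) edge is inserted with its trigger time, stale entries being dropped lazily; for expand events each maximal inner blossom stores $z(B)/2$. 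Each step costs one \textsc{delete-min} to find the event plus $O(1)$ further \textsc{delete-min}s and $O(1)$ \textsc{insert}s or \textsc{decrease-key}s per scanned edge and per touched blossom. Over a phase there are $O(n)$ \textsc{delete-min}s and $O(m)$ other heap operations, so with Fibonacci heaps the heap work is $O(m+n\log n)$; adding the $O(m)$ scanning and the $O(n)$ blossom bookkeeping gives $O(m+n\log n)$ for the phase, which is the claim.

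The main obstacle is exactly the blossom machinery: a shrink must promote the inner sub-blossoms of the contracted cycle to outer (so that their edges get scanned), an expand must re-expose the interior of an inner blossom while relabelling the even alternating subpath, and both must run in time proportional to the sub-blossoms touched; the amortized analysis then has to certify that, however shrinks and expands interleave within one phase, each edge is still scanned and each blossom still relabelled only $O(1)$ times. This is handled by the blossom-forest / split-find structure together with the offset bookkeeping of~\cite{DBLP:conf/soda/Gabow90}; the remainder --- feasibility of each dual adjustment and correctness of the augment step and of the ``maximum matching'' exit --- follows from matching duality (see e.g.~\cite{DBLP:journals/corr/Gabow16a}).
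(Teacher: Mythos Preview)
The paper does not prove this lemma at all: it is stated as a citation of Gabow~\cite{DBLP:conf/soda/Gabow90} and used as a black box. Your proposal therefore goes well beyond what the paper does, supplying a proof sketch where the paper simply imports the result.

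As a reconstruction of one phase of Gabow's primal--dual algorithm, your sketch is accurate in its broad strokes: the four event types (grow, shrink, expand, augment), the dual adjustment rule, the Fibonacci-heap bookkeeping with lazy offsets, and the $O(m)+O(n\log n)$ accounting are all the right ingredients, and you correctly flag the blossom-forest amortization as the delicate point that one ultimately defers to~\cite{DBLP:conf/soda/Gabow90}. One place that would benefit from a sentence more of care is the ``$\delta$ unbounded $\Rightarrow$ $M$ is maximum'' exit: as written you are asserting that the absence of any finite event threshold implies no $M$-augmenting path exists, but the clean way to see this is that in that situation every edge incident to an outer blossom goes to an inner blossom, all inner blossoms are singletons, and the outer/inner/free partition is then exactly the Gallai--Edmonds decomposition certifying maximality. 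This is indeed the ``standard theory'' you invoke, but since it is the only correctness claim not checked by the feasibility invariants (\ref{cond:wmatching:1})--(\ref{cond:wmatching:3}), it is worth saying explicitly rather than by appeal.
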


For $S\subseteq V$, we define $f(S)$ as a function that returns a pair of a maximum-size matching $M_S$ of $G[S]$ and dual variables $(\Omega_S, y_S, z_S)$
satisfying the conditions (\ref{cond:wmatching:1})--(\ref{cond:wmatching:3}).
We now give algorithms \textsc{Increment} and \textsc{Union}.

\subsubsection*{$\textsc{Increment}(X,f(X),x)$.}
Let $W$ be a value satisfying $W+y_X(v)\geq w(xv)$ for every $xv\in E[X\cup\{x\}]$.
Let $y:X\cup\{x\}\to\R$ be a function defined as $y(x):=W$ and $y(v):=y_X(v)$ for $v\in X$.
In the subgraph $G[X\cup \{x\}]$, a pair of the matching $M_X$ and dual variables $(\Omega_X, y, z_X)$ satisfies the conditions (\ref{cond:wmatching:1})--(\ref{cond:wmatching:3}).
Therefore, we can apply Lemma~\ref{lem:wmatching:augment}.
If $M_X$ is a maximum-size matching of $G[X\cup \{x\}]$, we return $M_X$ and $(\Omega_X, y, z_X)$.
Otherwise, we obtain a matching $M'$ of size $|M_X|+1$ and dual variables $(\Omega', y', z')$ satisfying the conditions (\ref{cond:wmatching:1})--(\ref{cond:wmatching:3}).
Because the cardinality of maximum-size matching of $G[X\cup\{x\}]$ is at most the cardinality of maximum-size matching of $G[X]$ plus one, the obtained $M'$ is a maximum-size matching of $G[X\cup\{x\}]$.
Therefore, we can return $M'$ and $(\Omega', y', z')$.

\subsubsection*{$\textsc{Union}((X_1,f(X_1)),\ldots,(X_c,f(X_c)))$.}
Because there exist no edges between $X_i$ and $X_j$ for any $i\neq j$,
we can simply return a pair of a maximum-size matching obtained by taking the union $\bigcup_i M_{X_i}$ and dual variables $(\Omega, y, z)$ such that
$\Omega:=\bigcup_i \Omega_{X_i}$, $y(v):=y_{X_i}(v)$ for $v\in X_i$, and $z(B)=z_{X_i}(B)$ for $B\in\Omega_{X_i}$.

\begin{proof}[Proof of Theorem~\ref{thm:wmatching}]
The algorithm $\textsc{Increment}(X,f(X),x)$ runs in $O(|E[X\cup\{x\}]|+|X|\log |X|)$ time and
the algorithm $\textsc{Union}((X_1,f(X_1)),\ldots,(X_c,f(X_c)))$ runs in $O(|\bigcup X_i|)$ time.
Therefore, from Theorem~\ref{thm:framework}, we can compute $f(V)$ in $O(k (m+n\log n))$ time.
From the duality theory, the perfect matching obtained by computing $f(V)$ is a maximum-weight perfect matching of $G$.
\end{proof}

\subsection{Negative cycle detection and potentials}
Let $G=(V,E)$ be a directed graph with an edge-weight function $w:E\to\R$.
For a function $p:V\to\R$, we define an edge-weight function $w_p$ as $w_p(uv):=w(uv)+p(u)-p(v)$.
If $w_p$ becomes non-negative for all edges, $p$ is called a \emph{potential on $G$}.

\begin{lemma}[\cite{schrijver2003}]\label{potential}
There exists a potential on $G$ if and only if $G$ has no negative cycles.
\end{lemma}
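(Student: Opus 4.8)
The plan is to prove the two implications separately; the forward direction is a one-line computation, and the converse is where the hypothesis does real work.

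For the direction ``a potential exists $\Rightarrow$ no negative cycle'', I would fix a potential $p$ and an arbitrary cycle $C = v_0 v_1 \cdots v_\ell$ with $v_\ell = v_0$, and sum the reduced weights along it:
\[
\sum_{i=0}^{\ell-1} w_p(v_i v_{i+1}) = \sum_{i=0}^{\ell-1} w(v_i v_{i+1}) + \sum_{i=0}^{\ell-1}\bigl(p(v_i) - p(v_{i+1})\bigr) = w(C),
\]
since the telescoping sum of potential differences vanishes on a closed walk. As $p$ is a potential, every term on the left is nonnegative, so $w(C) \ge 0$; hence $G$ has no negative cycle.

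For the converse I would build the potential from shortest-path distances. First I would reduce to a single source that reaches everything: form $G'$ from $G$ by adding a new vertex $s$ with an arc $sv$ of weight $0$ for every $v \in V$. Since $s$ has no incoming arc, every cycle of $G'$ already lies in $G$, so $G'$ still has no negative cycle. Next I would argue that the shortest-walk distance $d(v)$ from $s$ to $v$ in $G'$ is a well-defined real number: it is at most $w(sv) = 0$, and for the lower bound one decomposes any $s$–$v$ walk into a simple $s$–$v$ path together with a family of closed sub-walks, each of which splits into simple cycles of weight $\ge 0$ by hypothesis, so no walk is lighter than its underlying simple path; as there are only finitely many simple $s$–$v$ paths, the infimum defining $d(v)$ is attained and finite. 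Finally I would set $p(v) := d(v)$ for $v \in V$: for every arc $uv \in E$, prepending a shortest $s$–$u$ walk to the arc $uv$ gives an $s$–$v$ walk, so $d(v) \le d(u) + w(uv)$, i.e.\ $w_p(uv) = w(uv) + p(u) - p(v) \ge 0$, and $p$ is a potential on $G$.

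The main obstacle is exactly the well-definedness of the distances $d(v)$ in the converse: this is the only place the no-negative-cycle assumption enters, and it rests on the walk-decomposition argument that inserting a closed sub-walk never decreases the total weight. Everything else — the telescoping identity and the triangle inequality $d(v) \le d(u) + w(uv)$ — is routine. As alternatives, one could instead invoke the Bellman–Ford algorithm, which on a graph with no negative cycle terminates with precisely these distances, or derive feasibility of the linear system $\{\, p(u) - p(v) \ge -w(uv) : uv \in E \,\}$ from Farkas' lemma, noting that an infeasibility certificate is exactly a negative-weight circulation, hence a negative cycle.
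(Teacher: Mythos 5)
Your proof is correct. Note, however, that the paper does not prove this lemma at all---it is stated as a citation to Schrijver's book~\cite{schrijver2003} and used as a black box---so there is no in-paper argument to compare against. What you give is the standard textbook proof: the forward direction is the telescoping computation showing every cycle has nonnegative reduced (hence actual) weight, and the converse constructs the potential as shortest-path distances from an added super-source, with the no-negative-cycle hypothesis entering exactly where you say it does, namely in showing these distances are finite via decomposing any walk into a simple path plus closed sub-walks of nonnegative total weight. Your alternatives (Bellman--Ford termination, or Farkas/LP-duality for the difference-constraint system) are also the standard variants; either would be acceptable.
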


In this section, we prove the following theorem.
\begin{theorem}\label{thm:negative_cycle}
Given an edge-weighted directed graph and its elimination forest of depth $k$, we can compute either a potential or a negative cycle in $O(k (m+n\log n))$ time.
\end{theorem}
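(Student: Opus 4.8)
The plan is to apply the divide-and-conquer framework of Theorem~\ref{thm:framework} with $f(S)$ returning, for the induced subgraph $G[S]$, either a negative cycle of $G$ (in which case we are done immediately and can short-circuit all further computation) or a potential $p_S : S \to \R$ witnessing that $G[S]$ has no negative cycle. Since a negative cycle of $G[S]$ is a negative cycle of $G$, and a potential on all of $G[S]$ for $S=V$ is exactly what is asked for, it suffices to implement \textsc{Increment} and \textsc{Union} within the time budget $T(n,m)=O(m+n\log n)$. As always, \textsc{Union} is trivial here: given potentials $p_{X_1},\dots,p_{X_c}$ on the pairwise-nonadjacent pieces $G[X_i]$, their common extension is a potential on $G[\bigcup_i X_i]$ since there are no edges across pieces; if any piece already reported a negative cycle we just forward it.

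The content is in \textsc{Increment}$(X, f(X), x)$. If $f(X)$ is a negative cycle we forward it. Otherwise we have a potential $p_X$ on $G[X]$, and we must decide whether adding $x$ creates a negative cycle in $G[X\cup\{x\}]$ and, if not, produce a potential on $G[X\cup\{x\}]$. The key observation is that $w_{p_X}$ is nonnegative on all edges of $G[X]$, so in the reduced graph the only potentially ``problematic'' edges are those incident to $x$; and because every negative cycle through $x$ uses exactly one edge into $x$ and one edge out of $x$, detecting such a cycle reduces to a single-source shortest-path computation with nonnegative-ish weights. Concretely, I would run Dijkstra from $x$ in $G[X\cup\{x\}]$ under the weights $w_{p_X}$, where we first set a provisional value $p(x)$ so that the out-edges of $x$ also become nonnegative (e.g. $p(x) := \min_{xv \in E} (p_X(v) - w(xv))$, or simply handle $x$'s out-edges as the initialization step of Dijkstra with key $w(xv)+p(x)-p_X(v)\ge 0$). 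Let $d(v)$ be the resulting shortest-path distance from $x$ to $v$ in these nonnegative reduced weights. Then I would check the in-edges $vx \in E$: a negative cycle through $x$ exists if and only if some in-edge $vx$ satisfies $d(v) + w_{p}(vx) < 0$, i.e. $d(v) + w(vx) + p_X(v) - p(x) < 0$; in that case the shortest $x$-to-$v$ path concatenated with the edge $vx$ traces a negative cycle in $G$ (translating reduced weights back to original weights, the reduced weight of a cycle equals its original weight). If no in-edge violates the bound, then defining $p'(v) := p(v) + d(v)$ for $v \in X\cup\{x\}$ (with $p(v)=p_X(v)$ for $v\in X$) gives $w_{p'}(uv) = w_{p}(uv) + d(u) - d(v) \ge 0$ on every edge of $G[X]$ by the triangle inequality $d(v)\le d(u)+w_p(uv)$, on the out-edges of $x$ by the same inequality, and on the in-edges of $x$ by the check we just passed; hence $p'$ is a potential on $G[X\cup\{x\}]$. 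This runs in $O(m+n\log n)$ time, one Dijkstra call plus an $O(m)$ scan.

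With \textsc{Increment} running in $O(|E[X\cup\{x\}]| + |X\cup\{x\}|\log|X\cup\{x\}|)$ and \textsc{Union} in $O(|\bigcup_i X_i|)$, Theorem~\ref{thm:framework} with $T(n,m)=O(m+n\log n)$ yields the value $f(V)$ in $O(k(m+n\log n))$ time. By Lemma~\ref{potential}, $f(V)$ is either a negative cycle of $G$ or a potential on $G$, which is exactly the claim of Theorem~\ref{thm:negative_cycle}.

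I expect the main obstacle to be the bookkeeping around $x$'s out-edges when running Dijkstra under the reduced weights $w_{p_X}$: $p_X$ is only defined on $X$, so the out-edges $xv$ need not be nonnegative a priori, and one must either pick a valid provisional $p(x)$ or fold $x$'s out-edges into Dijkstra's initialization step rather than as ordinary relaxations. A secondary subtlety is the correctness of the negative-cycle certificate: one must argue that when $d(v)+w_p(vx)<0$ the recovered walk (shortest $x$–$v$ path plus $vx$) is a genuine \emph{simple} cycle of negative \emph{original} weight — simplicity holds because $v\ne x$ and the path is internally vertex-disjoint from $x$, and the weight claim follows because translating any closed walk from reduced to original weights leaves its total weight unchanged (the potential terms telescope). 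Both points are routine once set up carefully.
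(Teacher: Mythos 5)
Your approach is the same as the paper's: define $f(S)$ to return either a negative cycle or a potential on $G[S]$, make \textsc{Union} trivial, and in \textsc{Increment} run a single Dijkstra from $x$ under the reduced weights after choosing a provisional value of $p$ at $x$ that makes the \emph{out}-edges of $x$ nonnegative (while keeping the in-edges of $x$ out of the shortest-path computation), then scan the in-edges of $x$ for a negative cycle and otherwise update the potential by adding the Dijkstra distances. So far this matches the paper exactly.

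There is one genuine gap: you define the new potential as $p'(v) := p(v) + d(v)$ \emph{for every} $v \in X\cup\{x\}$, but $d(v)$ is only finite for vertices reachable from $x$. For an unreachable vertex $v$ you would be setting $p'(v) = +\infty$, which is not a valid real-valued potential, and the claim ``$w_{p'}(uv)\ge 0$ by the triangle inequality'' is vacuous whenever $u$ is unreachable. The paper explicitly splits $X\cup\{x\}$ into the reachable set $R$ and $X\setminus R$, sets $p(v):=p'(v)+d(v)$ on $R$ and $p(v):=p'(v)+D$ on $X\setminus R$ for a sufficiently large finite $D$, and then verifies nonnegativity case by case, using the structural fact that there are no edges from $R$ to $X\setminus R$ in $G[X\cup\{x\}]$ (any such edge would make its endpoint reachable, and the in-edges of $x$ all point into $R$ since $x\in R$). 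Your argument needs this additional case analysis; without it the construction of $p'$ is simply undefined on a (possibly nonempty) part of the graph. The fix is routine, but as written the proof is incomplete.
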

Suppose that we have a potential $p$.
Because $w_p$ is non-negative, we can compute a shortest-path tree rooted at a given vertex $s$ under $w_p$ in $O(m + n\log n)$ time by Dijkstra's algorithm.
For any $s-t$ path, its length under $w_p$ is exactly the length under $w$ plus a constant $p(s)-p(t)$.
Therefore, the obtained tree is also a shortest-path tree under $w$.
Thus, we obtain the following corollary.
\begin{corollary}
Given an edge-weighted directed graph without negative cycles, a vertex $s$, and its elimination forest of depth $k$,
we can compute a shortest-path tree rooted at $s$ in $O(k (m+n\log n))$ time.
\end{corollary}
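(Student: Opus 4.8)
The statement to prove is the Corollary: given an edge-weighted directed graph \emph{without} negative cycles, a source vertex $s$, and an elimination forest of depth $k$, compute a shortest-path tree rooted at $s$ in $O(k(m+n\log n))$ time. The key point is that the hypothesis ``no negative cycles'' is exactly what Lemma~\ref{potential} needs, so the plan is a two-phase reduction: first invoke the machinery already in hand to obtain a potential, then run a single Dijkstra computation with reweighted edge lengths. Concretely, I would proceed as follows. First, apply Theorem~\ref{thm:negative_cycle} to the input graph together with the given elimination forest of depth $k$. This returns, in $O(k(m+n\log n))$ time, either a negative cycle or a potential $p$; since we are promised the graph has no negative cycles, by Lemma~\ref{potential} the output must be a potential $p:V\to\R$, i.e., $w_p(uv)=w(uv)+p(u)-p(v)\geq 0$ for every edge $uv\in E$.

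Second, with $w_p$ nonnegative I would run Dijkstra's algorithm from $s$ with respect to the weights $w_p$, using a Fibonacci-heap priority queue so that the running time is $O(m+n\log n)$. This produces a shortest-path tree $T_s$ rooted at $s$ for the weight function $w_p$. The remaining step is to argue that $T_s$ is \emph{also} a shortest-path tree under the original weights $w$. This is the standard Johnson-style telescoping observation: for any $s$--$t$ path $P=(s=v_0,v_1,\dots,v_\ell=t)$, we have $w_p(P)=\sum_{i} \bigl(w(v_{i-1}v_i)+p(v_{i-1})-p(v_i)\bigr)=w(P)+p(s)-p(t)$, so the two weight functions differ on every $s$--$t$ path by the \emph{same} additive constant $p(s)-p(t)$ that depends only on the endpoints, not on the path. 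Hence a path minimizes $w$-length from $s$ to $t$ iff it minimizes $w_p$-length, and so the tree of $w_p$-shortest paths is a tree of $w$-shortest paths. (I would also note that $w_p$ having no negative edges, together with the no-negative-cycle hypothesis, guarantees every vertex reachable from $s$ has a well-defined finite shortest-path distance, so Dijkstra behaves correctly; unreachable vertices simply do not appear in $T_s$, as for any shortest-path tree.)

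Finally, I would add up the costs: Theorem~\ref{thm:negative_cycle} costs $O(k(m+n\log n))$, the Dijkstra phase costs $O(m+n\log n)$, and the telescoping argument is $O(1)$ per path and contributes nothing to the asymptotics, so the total is $O(k(m+n\log n))$ as claimed. The only genuine content here — and the part I would be most careful to state cleanly — is the correctness of the reweighting: that the potential produced by Theorem~\ref{thm:negative_cycle} makes Dijkstra applicable and that the resulting tree transfers back to $w$. There is no real obstacle, since both halves (computing a potential via the divide-and-conquer framework, and Johnson's reweighting trick) are already available; the corollary is essentially an immediate composition of Theorem~\ref{thm:negative_cycle} with classical Dijkstra, and indeed the paragraph preceding the statement in the excerpt already sketches exactly this argument, so the proof is little more than a formal recording of it.
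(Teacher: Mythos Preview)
Your proposal is correct and follows exactly the same approach as the paper: obtain a potential via Theorem~\ref{thm:negative_cycle} (guaranteed to succeed by Lemma~\ref{potential} since there are no negative cycles), then run a single Dijkstra under the reweighted nonnegative lengths $w_p$, and transfer the resulting tree back to $w$ via the telescoping identity $w_p(P)=w(P)+p(s)-p(t)$. The paper's own argument is precisely the paragraph preceding the corollary, which you have faithfully reproduced and slightly expanded.
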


For $S\subseteq V$, we define $f(S)$ as a function that returns either a potential $p_S:S\to\R$ on $G[S]$ or a negative cycle contained in $G[S]$.
We now give algorithms \textsc{Increment} and \textsc{Union}.

\subsubsection*{$\textsc{Increment}(X,f(X),x)$.}
If $f(X)$ is a negative cycle, we return it.
Otherwise, let $G'=(X\cup\{x\},E')$ be the graph obtained from $G[X\cup\{x\}]$ by removing all the edges incoming to $x$.
Let $W$ be a value satisfying $w(xv)+W-p_X(v)\geq 0$ for every $xv\in E'$.
Let $p':X\cup\{x\}\to\R$ be a function defined as $p'(x):=W$ and $p'(v):=p_X(v)$ for $v\in X$.
Because $x$ has no incoming edges in $G'$, $p'$ is a potential on $G'$.
Therefore, we can compute a shortest-path tree rooted at $x$ under $w_{p'}$ in $O(|E[X]|+|X|\log |X|)$ time by Dijkstra's algorithm.
Let $R$ be the set of vertices reachable from $x$ in $G'$ and let $d:R\to \R$ be the shortest-path distance from $x$ under $w_{p'}$.
If there exists an edge $vx\in E[X\cup\{x\}]$ such that $v\in R$ and $d(v)+w_{p'}(vx)<0$, $G[X\cup\{x\}]$ contains a negative cycle
starting from $x$, going to $v$ along the shortest-path tree, and coming back to $x$ via the edge $vx$.
Otherwise, let $D$ be a value satisfying $w_{p'}(uv)+D-d(v)\geq 0$ for every $uv\in E[X\cup\{x\}]$ with $u\in X\setminus R$ and $v\in R$.
Then, we return a function $p:X\cup\{x\}\to\R$ defined as $p(v):=p'(v)+d(v)$ if $v\in R$ and $p(v):=p'(v)+D$ if $v\in X\setminus R$.
\begin{claim}
$p$ is a potential on $G[X\cup\{x\}]$.
\end{claim}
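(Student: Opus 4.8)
The plan is to verify directly that $p$ is a potential on $G[X\cup\{x\}]$, i.e.\ that $w_p(uv)\ge 0$ for every $uv\in E[X\cup\{x\}]$, arguing by cases on where the endpoints $u$ and $v$ lie relative to the reachable set $R$. Before the case analysis I would record four facts. First, $x\in R$ and $d(x)=0$. Second, since $p'$ is a potential on $G'$, Dijkstra's distances $d$ are well-defined and nonnegative on $R$. Third, $G[X\cup\{x\}]$ and $G'$ differ exactly in the edges pointing into $x$; so an edge $uv$ with $v\neq x$ automatically lies in $E'$. Fourth, by expanding the definition of $p$ one gets the identity $w_p(uv)=w_{p'}(uv)+\delta(u)-\delta(v)$, where $\delta(\cdot)=d(\cdot)$ on $R$ and $\delta(\cdot)=D$ on $X\setminus R$; so every case reduces to bounding $w_{p'}(uv)+\delta(u)-\delta(v)$ from below.

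Next I would dispose of the cases with $u\in X\setminus R$. If $v\in R$, the quantity is $w_{p'}(uv)+D-d(v)$, which is $\ge 0$ by the defining inequality of $D$ (note this condition is imposed on all such edges, including the case $v=x$). If $v\in X\setminus R$, the two copies of $D$ cancel and we are left with $w_{p'}(uv)$; since $u,v\neq x$ this is an edge of $E[X]$, and $p'$ restricts to $p_X$ on $X$, so $w_{p'}(uv)=w_{p_X}(uv)\ge 0$ because $p_X$ is a potential on $G[X]$.

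Then I would treat $u\in R$. The subcase $v\in X\setminus R$ is vacuous: if $uv\in E'$ then $v$ is reachable from $x$ through $u$, contradicting $v\notin R$; and if $uv\notin E'$ then $v=x\in R$, also a contradiction. If $v\in R$ and $v\neq x$, then $uv\in E'$ and $u$ is reachable, so the shortest-path relaxation inequality $d(v)\le d(u)+w_{p'}(uv)$ yields $w_{p'}(uv)+d(u)-d(v)\ge 0$. Finally, if $v=x$, then $ux$ is an edge entering $x$ with $u\in R$, and being in the ``otherwise'' branch means exactly that $d(u)+w_{p'}(ux)\ge 0$; since $d(x)=0$ this gives $w_{p'}(ux)+d(u)-d(x)\ge 0$. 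All cases being covered, $w_p\ge 0$ on every edge, hence $p$ is a potential on $G[X\cup\{x\}]$.

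The only real care needed is bookkeeping: each inequality invoked (Dijkstra relaxation on $R$, the choice of $D$, the inherited potential $p_X$ on $E[X]$, and the ``no negative cycle through $x$'' test) applies to a specific family of edges, and one must check that these families together exhaust $E[X\cup\{x\}]$ — in particular that the potentially troublesome configuration $u\in R$, $v\notin R$ cannot occur. I do not expect any difficulty beyond making this case enumeration airtight.
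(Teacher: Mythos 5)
Your proof is correct and follows essentially the same case analysis as the paper's, checking $w_p(uv)\ge 0$ by splitting on whether each endpoint lies in $R$. You are, if anything, more careful than the paper's terse three-case display: you explicitly separate the subcase $u\in R$, $v=x$ (where the inequality comes from the no-negative-cycle test rather than Dijkstra relaxation, since $ux\notin E'$) and you verify that the $u\in R$, $v\in X\setminus R$ configuration is vacuous, both of which the paper treats only implicitly.
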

\begin{proof}
For every edge $uv\in E[X\cup\{x\}]$, we have
\[
w_p(uv)=\begin{cases}
w_{p'}(uv)+d(u)-d(v)\geq 0&\text{if } u,v\in R,\\
w_{p'}(uv)+D-d(v)\geq 0&\text{if }u\in X\setminus R, v\in R,\\
w_{p'}(uv)+D-D\geq 0&\text{if }u\in X\setminus R, v\in X\setminus R.
\end{cases}
\]
Note that there are no edges from $R$ to $X\setminus R$.
\end{proof}

\subsubsection*{$\textsc{Union}((X_1,f(X_1)),\ldots,(X_c,f(X_c)))$.}
If at least one of $f(X_i)$ is a negative cycle, we return it.
Otherwise, we return a potential $p$ defined as $p(v):=p_{X_i}(v)$ for $v\in X_i$.

\begin{proof}[Proof of Theorem~\ref{thm:negative_cycle}]
The algorithm $\textsc{Increment}(X,f(X),x)$ correctly computes $f(X\cup\{x\})$ in $O(|E[X]|+|X|\log |X|)$ time and
the algorithm $\textsc{Union}((X_1,f(X_1)),\ldots,(X_c,f(X_c)))$ correctly computes $f(\bigcup_i X_i)$ in $O(|\bigcup_i X_i|)$ time.
Therefore, from Theorem~\ref{thm:framework}, we can compute $f(V)$, i.e., either a potential on $G$ or a negative cycle contained in $G$, in $O(k(m+n\log n))$ time.
\end{proof}

\if 0
\subsection{Minimum weight vertex-disjoint $s-t$ paths}
An {\it $s-t$ path} is a path from the source vertex $s\in V$ to the sink vertex $t\in V$ and {\it vertex-disjoint $s-t$ paths} are $s-t$ paths that share no vertices except $s$ and $t$.
Suppose the size $|P|$ of vertex-disjoint $s-t$ paths is given.
For a directed graph $G=(V,E)$ such that each weight is assigned to each edge by a weight function $w:E \rightarrow Z$, the minimum weight vertex-disjoint $s-t$ paths problem is computing the minimum weight of vertex-disjoint $s-t$ paths whose size is $|P|$.

Note that this problem is the special case of the minimum cost flow problem where each vertex has a unit capacity, so we can apply the standard approach of the minimum cost flow problem for this problem.

In general, to compute the minimum cost $s-t$ flow, we perform the following operation step by step: find the shortest $s-t$ path of $G$ where the weight of an edge $uv$ is regarded as the distance from $u$ to $v$ and then construct the {\it residual graph} of $G$.

\begin{lemma}\label{minimum cost}
For a directed graph $G=(V,E)$, the minimum cost $(s,t)$-vertex flow is computed in $O(m+n \log n)$ time by Dijkstra's algorithm.
\end{lemma}

Now, we have the following theorem by this Lemma\ref{minimum cost} and Theorem~\ref{thm:framework}.

\begin{theorem}\label{disjoint paths}
Given an edge-weighted directed graph $G=(V,E)$ and its elimination forest of depth $k$, we can compute the minimum weight vertex-disjoint $s-t$ paths in $O(k( m + n\log n))$ time.
\end{theorem}

In the same way as when the negative cycle detection, we use potentials on $G$.
Let $f(S)$ be a function that returns the minimum weight vertex-disjoint $s-t$ paths of $G[S]$.
We describe \textsc{Increment} and \textsc{Union} below.

\subsubsection*{$\textsc{Increment}(X,f(X),x)$.}
By Theorem~\ref{thm:negative_cycle}, if there exists a negative cycle in $G[X\cup\{x\}]$, we select the negative cycle as a part of minimum weight vertex-disjoint $s-t$ paths and create a residual graph of $G[X\cup\{x\}]$.
Otherwise, we can use a potential on $G[X\cup\{x\}]$ by Lemma~\ref{potential}.
However, there is a possibility that an edge in $E[X\cup\{x\}]\setminus E[X]$ has a negative edge-weight.
Therefore we need to adjust a potential on $G[X\cup\{x\}]$ by using Dijkstra's algorithm from the vertex $x$. 
Here, let $w(e_1)$ be the minimum negative edge-weight of $G[X\cup\{x\}]$ where an edge $e_1=xa_1 \in E[X\cup\{x\}]\setminus E[X]$.
First of all, Dijkstra's algorithm selects a vertex $a_1$ on $G[X\cup\{x\}]$.
Then, a potential $p(a_1)$ is assigned to $w_p(e_1)$ so that $w_p(e_1)=w(e_1)-p(a_1) = w(e_1) - w(e_1) = 0$.
Next, we consider the situation that an edge $e_2=xa_2 \in E[X\cup\{x\}]\setminus E[X]$ of a negative edge-weight $w(e_2)$ is checked.
By Dijkstra's algorithm, a potential $p(a_2)$ is at most $w(e_2)$, so an edge-function $w_p(e_2)$ becomes $w(e_2)-p(a_2) \geq w(e_2)-w(e_2) \geq 0$.
Similarly, for $e_i=xa_i$, each edge-function $w_p(e_i)$ becomes non-negative.
After that, we find the shortest $s-t$ path of $G[X\cup\{x\}]$ that has no negative edges and construct the residual graph of $G[X\cup\{x\}]$.

\subsubsection*{$\textsc{Union}((X_1,f(X_1)),\ldots,(X_c,f(X_c)))$.}
The vertex-disjoint $s-t$ paths share no vertices and edges except $s$ and $t$, so we can compute the minimum weight vertex-disjoint $s-t$ paths by returning the top-$|P|$ shortest $s-t$ paths among $f(X_1), \ldots, f(X_c)$.

\begin{proof}[Proof of Theorem~\ref{disjoint paths}]
The algorithm $\textsc{Increment}(X,f(X),x)$ correctly computes $f(X\cup\{x\})$ in $O(|E[X|+ |X| \log |X|)$ time by Theorem~\ref{thm:negative_cycle} and Dijkstra's algorithm, and
the algorithm $\textsc{Union}((X_1,f(X_1)),\ldots,(X_c,f(X_c))$ correctly computes $f(X_1\cup \ldots \cup X_c)$ in $O(|E[X_1]|+ \ldots + |E[X_c]|)$ time.
Therefore, from Theorem~\ref{thm:framework}, we can compute the minimum weight vertex-disjoint $s-t$ paths on $G$ in $O(k( m + n\log n))$ time.
\end{proof}
\fi

\subsection{Minimum weight cycle}
In this section, we prove the following theorem.

\begin{theorem}\label{minimum weight cycle}
Given a non-negative edge-weighted undirected or directed graph and its elimination forest of depth $k$,
we can compute a minimum-weight cycle in $O(k (m + n\log n))$ time.
\end{theorem}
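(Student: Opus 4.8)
The plan is to instantiate the divide-and-conquer framework of Theorem~\ref{thm:framework} with $T(n,m)=O(m+n\log n)$. For $S\subseteq V$, let $f(S)$ be a minimum-weight cycle of $G[S]$ (stored as its edge list), or a symbol $\bot$ when $G[S]$ is acyclic; then $f(\emptyset)=\bot$ is obtained in constant time. The algorithm \textsc{Union} is trivial: since $G$ has no edge between $X_i$ and $X_j$ for $i\neq j$, every cycle of $G[\bigcup_i X_i]$ lies inside a single $G[X_i]$, so a minimum-weight cycle of the union is just the lightest among $f(X_1),\ldots,f(X_c)$, found in $O(\sum_i|X_i|)$ time. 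Hence all the work goes into \textsc{Increment}$(X,f(X),x)$: a cycle of $G[X\cup\{x\}]$ either avoids $x$, in which case the best one is precisely $f(X)$, or it passes through $x$, so it suffices to compute a minimum-weight cycle through $x$ in $O(|E[X\cup\{x\}]|+|X\cup\{x\}|\log|X\cup\{x\}|)$ time and return the lighter of the two.

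For directed graphs I would use the standard vertex-splitting reduction: replace $x$ by vertices $x_{\mathrm{out}}$ and $x_{\mathrm{in}}$, reroute every arc leaving $x$ so that it leaves $x_{\mathrm{out}}$, reroute every arc entering $x$ so that it enters $x_{\mathrm{in}}$, and add no arc between $x_{\mathrm{out}}$ and $x_{\mathrm{in}}$. Because all weights are non-negative, a shortest $x_{\mathrm{out}}$--$x_{\mathrm{in}}$ walk in this digraph may be taken to be a simple path; such a path has at least two arcs and corresponds to a simple directed cycle through $x$ of the same weight, and conversely every directed cycle through $x$ yields such a path. Thus one Dijkstra run from $x_{\mathrm{out}}$ returns a minimum-weight cycle through $x$ (or reports $\bot$ if $x_{\mathrm{in}}$ is unreachable) in $O(|E[X\cup\{x\}]|+|X|\log|X|)$ time.

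For undirected graphs I would run Dijkstra from $x$ in $G[X\cup\{x\}]$ to get distances $d(\cdot)$ and a shortest-path tree $\tau$ rooted at $x$, and then scan all edges. Call an edge $uv$ \emph{good} if $uv\notin\tau$ and the $\tau$-paths from $x$ to $u$ and from $x$ to $v$ share only the vertex $x$; equivalently, $uv\notin\tau$ and either $x\in\{u,v\}$ or $u$ and $v$ sit in the subtrees of two different children of $x$. For a good edge $uv$, concatenating the $x$--$u$ path, the edge $uv$, and the reverse of the $x$--$v$ path gives a genuine cycle through $x$ of weight $d(u)+w(uv)+d(v)$, so the minimum of $d(u)+w(uv)+d(v)$ over good edges is at least the minimum weight $w^{*}$ of a cycle through $x$. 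The key claim is that equality holds: taking a minimum cycle $C=(x=c_0,c_1,\ldots,c_\ell=x)$ through $x$, if some edge $c_ic_{i+1}$ with $c_i,c_{i+1}\neq x$ joins two distinct child-subtrees of $x$, then it is good and $d(c_i)+w(c_ic_{i+1})+d(c_{i+1})\le\mathrm{dist}_C(x,c_i)+w(c_ic_{i+1})+\mathrm{dist}_C(c_{i+1},x)=w^{*}$; otherwise $c_1,\ldots,c_{\ell-1}$ all lie in one child-subtree of $x$, and then one of the good edges $xc_1$, $xc_{\ell-1}$ (whichever avoids $\tau$) already gives a cycle of weight at most $w^{*}$, e.g.\ $d(x)+w(xc_1)+d(c_1)\le w(xc_1)+(w^{*}-w(xc_1))=w^{*}$. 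Deciding goodness for every edge costs $O(m)$ time after one $O(n)$ traversal of $\tau$ that marks each vertex with the child of $x$ on its root path, so \textsc{Increment} runs within the claimed budget.

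Putting this together, \textsc{Increment} and \textsc{Union} satisfy the hypotheses of Theorem~\ref{thm:framework} with $T(n,m)=O(m+n\log n)$, which yields a minimum-weight cycle of $G$ in $O(k(m+n\log n))$ time. I expect the main obstacle to be the undirected case, specifically arguing that a single Dijkstra call from $x$ is enough: one must show that restricting attention to edges joining different child-subtrees of $x$ (rather than, say, performing a separate shortest-path computation for each edge incident to $x$) still captures the true minimum cycle through $x$, and the balanced-crossing-edge argument above, together with the corner case in which the optimal cycle stays within one child-subtree, is exactly what secures this.
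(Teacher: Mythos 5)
Your proposal is correct and uses the same high-level structure as the paper: instantiate Theorem~\ref{thm:framework} with $f(S)$ a minimum-weight cycle of $G[S]$, trivial \textsc{Union}, and an \textsc{Increment} that finds a minimum-weight cycle through $x$ via a single Dijkstra run from $x$. The two accounts differ in how \textsc{Increment} handles the details. For the directed case you use vertex-splitting into $x_{\mathrm{out}},x_{\mathrm{in}}$, whereas the paper runs Dijkstra from $x$ in $G[X\cup\{x\}]$ and simply scans incoming edges $ux$ minimizing $d(u)+w(ux)$; these are interchangeable. The more substantive difference is in the undirected case. You prove a standalone characterization: the minimum weight of a cycle through $x$ equals the minimum of $d(u)+w(uv)+d(v)$ over \emph{good} non-tree edges (those for which the tree paths to $u$ and $v$ meet only at $x$), and you justify this by a case analysis on whether an optimal cycle crosses between two child-subtrees of $x$ or stays within one. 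The paper instead minimizes over \emph{all} non-tree edges and side-steps the degeneracy issue with a slicker trick: if the value found is smaller than $w(f(X))$, then the corresponding closed walk must already be a simple cycle through $x$, because otherwise it would contain a simple cycle lying entirely in $G[X]$ that is cheaper than $f(X)$, contradicting the minimality of $f(X)$. Your good-edge filter needs an extra $O(n)$ marking pass but yields a characterization of the minimum cycle through $x$ that does not lean on the inductively maintained $f(X)$; the paper's version is shorter to implement but its correctness is entangled with the induction. Both arguments are sound and give the same $O(k(m+n\log n))$ bound.
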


Note that when the graph is undirected, a closed walk of length two using the same edge twice is not considered as a cycle.
Therefore, we cannot simply reduce the undirected version into the directed version by replacing each undirected edge by two directed edges of both directions.


Let $G=(V,E)$ be the input graph with an edge-weight function $w:E\to\R_{\geq 0}$.
For $S\subseteq V$, we define $f(S)$ as a function that returns a minimum-weight cycle of $G[S]$.
We describe \textsc{Increment} and \textsc{Union} below.

\subsubsection*{$\textsc{Increment}(X,f(X),x)$.}
Because we already have a minimum-weight cycle $f(X)$ of $G[X]$, we only need to find a minimum-weight cycle passing through $x$.
First, we construct a shortest-path tree of $G[X\cup\{x\}]$ rooted at $x$ and let $d: X\cup\{x\}\to\R$ be the shortest-path distance.

When the graph is undirected,
we find an edge $uv\in E[X\cup\{x\}]$ not contained in the shortest-path tree minimizing $d(u) + w(uv) + d(v)$.
If this weight is at least the weight of $f(X)$, we return $f(X)$.
Otherwise, we return the cycle starting from $x$, going to $u$ along the shortest-path tree, jumping to $v$ through the edge $uv$, and coming back to $x$ along the shortest-path tree.
Note that this always forms a cycle because otherwise, it induces a cycle contained in $G[X]$ that has a smaller weight than $f(X)$, which is a contradiction.

We can prove the correctness of this algorithm as follows.
Let $W$ be the weight of the cycle obtained by the algorithm and let $C$ be a cycle passing through $x$.
Let $v_0=x,v_1,\ldots,v_{\ell-1},v_\ell=x$ the vertices on $C$ in order.
Because a tree contains no cycles, there exists an edge $v_iv_{i+1}$ not contained in the shortest-path tree.
Therefore, the weight of $C$ is
$\sum_{j=0}^{i-1}w(v_jv_{j+1})+w(v_iv_{i+1})+\sum_{j=i+1}^{\ell-1}w(v_jv_{j+1})\geq d(v_i)+w(v_iv_{i+1})+d(v_{i+1})\geq W$.

When the graph is directed,
we find an edge $ux\in E[X\cup\{x\}]$ with the minimum $d(u)+w(ux)$.
If this weight is at least the weight of $f(X)$, we return $f(X)$.
Otherwise, we return the cycle starting from $x$, going to $u$ along the shortest-path tree, and coming back to $x$ through the edge $ux$.

\subsubsection*{$\textsc{Union}((X_1,f(X_1)),\ldots,(X_c,f(X_c)))$.}
We return a cycle of the minimum weight among $f(X_1), \ldots, f(X_c)$.

\begin{proof}[Proof of Theorem~\ref{minimum weight cycle}]
The algorithm $\textsc{Increment}(X,f(X),x)$ correctly computes $f(X\cup\{x\})$ in $O(|E[X]|+|X|\log |X|)$ time and
the algorithm $\textsc{Union}((X_1,f(X_1)),\ldots,(X_c,f(X_c))$ correctly computes $f(\bigcup_i X_i)$ in $O(|\bigcup_i X_i|)$ time.
Therefore, from Theorem~\ref{thm:framework}, we can compute a minimum-weight cycle in $O(k(m+n\log n))$ time.
\end{proof}

\subsection{Replacement paths}\label{sec:replacement}
Fix two vertices $s$ and $t$.
For an edge-weighed directed graph $G=(V,E)$ and an edge $e\in E$, we denote the length of the shortest $s-t$ path avoiding $e$ by $r_G(e)$.
In this section, we prove the following theorem.

\begin{theorem}\label{replacement paths}
Given an edge-weighted directed graph $G=(V,E)$, a shortest $s-t$ path $P$, and its elimination forest of depth $k$,
we can compute $r_G(e)$ for all edges $e$ on $P$ in $O(k(m+n\log n))$ time.
\end{theorem}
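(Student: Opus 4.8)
The plan is to instantiate the divide-and-conquer framework of Theorem~\ref{thm:framework}, after two reductions. First, as is standard for this problem, the input graph has no negative cycle (otherwise no shortest $s$-$t$ path exists), so I would invoke Theorem~\ref{thm:negative_cycle} once, in $O(k(m+n\log n))$ time, to get a potential $p$, and replace every weight $w(uv)$ by $w_p(uv)=w(uv)+p(u)-p(v)\ge 0$. Deleting an edge keeps $w_p$ nonnegative and shifts the length of every $s$-$t$ path by the same constant $p(s)-p(t)$, so each $r_G(e)$ under $w$ is recovered by adding this constant back; hence from now on all weights are nonnegative and Dijkstra's algorithm is available inside every subgraph we meet. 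Second, I would set up the classical detour decomposition: writing $P=(p_0=s,p_1,\dots,p_\ell=t)$ and $e_i=p_ip_{i+1}$, a shortest $s$-$t$ path avoiding $e_i$ may be taken of the form $P[s\to p_a]\cdot Q\cdot P[p_b\to t]$ with $a\le i<b$, where $Q$ is a $p_a$-$p_b$ path whose internal vertices avoid $p_{a+1},\dots,p_{b-1}$ (such a $Q$ automatically avoids $e_i$), and $\mathrm{len}_P(s,p_a)$, $\mathrm{len}_P(p_b,t)$ equal the true shortest-path distances $d(s,p_a)$, $d(p_b,t)$ since subpaths of shortest paths are shortest. Thus each pair $(a,b)$ with a finite cheapest detour $\mathrm{det}(a,b)$ contributes the value $d(s,p_a)+\mathrm{det}(a,b)+d(p_b,t)$ to every $e_i$ with $a\le i<b$, and once these proposals are available all $r_G(e_i)$ follow from a single interval-minimum sweep over the indices $0,\dots,\ell-1$ in near-linear time.

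Accordingly I would design $f(S)$ so that running the recursion on $G$ produces exactly the proposals above. For a set $S$, let $A(S)=\{u\notin S : u\text{ has an edge to }S\}$; the elimination-forest property forces $A(S)$ into the ancestors of $S$, so it plays the role of an $O(k)$-vertex boundary. The function $f(S)$ should store, in $O(|E[S]|+|S|)$ total space: (i) how cheaply one can cross $G[S]$ between boundary vertices (and between boundary vertices and the $P$-vertices currently in $S$) while respecting the $P$-avoidance constraints, so that detours that dip into and out of $S$ can be continued; and (ii) the ``interval proposals'' contributed by $P$-vertices and $P$-edges that have already become interior to $S$. With $f(\{x\})$ trivial, \textsc{Increment}$(X,f(X),x)$ adds one vertex $x$ — adjacent in $G[X\cup\{x\}]$ only to vertices of $X$ — and updates both parts by a constant number of Dijkstra computations from and to $x$ in $G[X\cup\{x\}]$, folding in any $P$-subpath or $P$-edge that $x$ completes, all within the allotted $O(|E[X\cup\{x\}]|+|X|\log|X|)$ time. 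The \textsc{Union} step is the non-trivial one the paper warns about: for components $X_1,\dots,X_c$ with no edges between them, a detour through $\bigcup_iX_i$ alternates between the $X_j$'s and the shared boundary $\bigcup_iA(X_i)$, so $f(\bigcup_iX_i)$ is obtained by a shortest-path computation on a small auxiliary graph on that boundary (together with the relevant $P$-vertices) whose arc weights come from the $f(X_j)$'s, while merging the interval-proposal lists; finally $r_G(e)$ for all $e$ on $P$ is read off from $f(V)$, and Theorem~\ref{thm:framework} gives total time $O(k(m+n\log n))$.

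The hard part, and the reason this is the paper's only nontrivial \textsc{Union}, will be pinning the invariant $f$ down so that it is simultaneously compact (size $O(|E[S]|+|S|)$, reconstructible by $O(1)$ Dijkstra runs) and yet rich enough that \textsc{Union} can splice per-component detour pieces through the shared boundary correctly — never re-using a forbidden $P$-edge, and neither under- nor over-counting the $P$-avoidance constraints, the subtlety being that one global detour $\mathrm{det}(a,b)$ may be cut by a separator into several pieces living in different components and constrained differently. I expect the proof to resolve this by maintaining, per boundary vertex and per relevant $P$-vertex, only a constant number of distances under a few carefully chosen ``$P$-prefix/suffix-avoiding'' re-weightings, and then to establish correctness of \textsc{Increment} and \textsc{Union} by the same style of case analysis on where a detour crosses the newly added vertex (respectively the shared boundary) used in the negative-cycle section, combined with the interval-minimum extraction described above.
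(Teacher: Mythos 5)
You have the right skeleton — the detour decomposition (your ``proposals $d(s,p_a)+\mathrm{det}(a,b)+d(p_b,t)$ charged to the interval $[a,b)$'') is essentially the paper's Lemma~\ref{lem:replacement:structure}, and the preliminary reweighting by a potential is a sound (and indeed necessary) way to make Dijkstra available in the subproblems. But from there you diverge onto a genuinely different, and in my judgement unworkable within the framework's budget, design for $f(S)$, and you miss the one idea that makes the paper's proof short.

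The paper does \emph{not} track the elimination-forest boundary $A(S)$ at all. Instead, it defines $f(S)$ as the list of replacement-path lengths $r_{G[S]\cup P}(v_iv_{i+1})$, for those indices $i$ with $v_i\in S$, in the \emph{augmented} graph $G[S]\cup P$ — the induced subgraph on $S$ together with all vertices and all edges of the given shortest path $P$. This augmentation is the crucial trick: a detour that needs to leave $S$ and re-enter it does so via $P$ itself, and Lemma~\ref{lem:replacement:structure} (with $d_S$ taken over $G[S]\setminus P$) shows that the optimal replacement value in $G[S]\cup P$ is a clean interval-minimum over detour segments that live entirely inside $G[S]\setminus P$. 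Consequently there is nothing to splice across separators: since components $X_1,\dots,X_c$ have no inter-component edges, $d_{\bigcup X_i}(u,v)=\min_j d_{X_j}(u,v)$, hence $r_{G[\bigcup X_i]\cup P}=\min_j r_{G[X_j]\cup P}$ pointwise, and \textsc{Union} is just a merge of $c$ piecewise-constant sequences. The only reason the paper calls this \textsc{Union} nontrivial is the succinct representation (storing $r$ only at indices $i$ with $v_i\in S$), which is handled with a heap over the $c$ current values as $i$ increases — not with any auxiliary shortest-path computation on a boundary graph.

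Your proposed $f(S)$ — storing boundary-to-boundary and boundary-to-$P$-vertex reachability information so that cut detour pieces can later be spliced — is both more complicated and, as you yourself note, not pinned down. More seriously, it does not obviously fit Theorem~\ref{thm:framework}: the boundary $A(S)$ can have up to $k$ vertices, so an auxiliary graph over $A(S)$ (plus $P$-vertices) may have $\Theta(k^2)$ arcs, and running Dijkstra there inside \textsc{Union} is not bounded by $T(|\bigcup X_i|,|E[\bigcup X_i]|)=O(|E[\bigcup X_i]|+|\bigcup X_i|\log|\bigcup X_i|)$ when the subgraph is sparse. The fix is not to shrink your invariant but to change it entirely: carry $P$ along as part of every subproblem, so that boundary information becomes unnecessary. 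Once you do that, \textsc{Increment}$(X,f(X),x)$ is two Dijkstra runs in $G[X\cup\{x\}]\setminus P$ from and to $x$, followed by two linear prefix/suffix scans to form $L_i=\min_{a\le i}(\mathrm{pref}(v_a)+d(v_a,x))$ and $R_i=\min_{b>i}(d(x,v_b)+\mathrm{suf}(v_b))$ and take $\min(r_{G[X]\cup P}(v_iv_{i+1}),L_i+R_i)$, and \textsc{Union} is the heap merge above.
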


Let $v_0(=s),v_1,\ldots,v_{\ell-1},v_\ell(=t)$ be the vertices on the given shortest $s-t$ path $P$ in order.
For $i\in\{0,\ldots,\ell\}$, we denote the length of the prefix $v_0v_1\ldots v_i$ by $\mathrm{pref}(v_i)$ and the length of the suffix $v_iv_{i+1}\ldots v_\ell$ by $\mathrm{suf}(v_i)$.
These can be precomputed in linear time.

For $S\subseteq V$, we define $G[S]\cup P$ as a graph consisting of vertices $S\cup\{v_0,\ldots,v_\ell\}$ and edges $E[S]\cup\{v_0v_1,\ldots,v_{\ell-1}v_\ell\}$,
and define $G[S]\setminus P$ as a graph consisting of vertices $S$ and edges $E[S]\setminus\{v_0v_1,\ldots,v_{\ell-1}v_\ell\}$.
We denote the shortest-path length from $u$ to $v$ in $G[S]\setminus P$ by $d_S(u,v)$.
For convenience, we define $d_S(u,v)=\infty$ when $u\not\in S$ or $v\not\in S$.
We use the following lemma.
\begin{lemma}\label{lem:replacement:structure}
For any $S\subseteq V$ and any $i\in\{0,\ldots,\ell-1\}$, $r_{G[S]\cup P}(v_i v_{i+1})$ is the minimum of
$\mathrm{pref}(v_a)+d_S(v_a,v_b)+\mathrm{suf}(v_b)$ for $a\leq i<b$.
\end{lemma}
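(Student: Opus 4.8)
The plan is to establish separately the two inequalities between $r_{G[S]\cup P}(v_iv_{i+1})$ and $M_i := \min\{\mathrm{pref}(v_a)+d_S(v_a,v_b)+\mathrm{suf}(v_b) : a\le i<b\}$. Note first that pairs $(a,b)$ with $v_a\notin S$ or $v_b\notin S$ contribute $\infty$ to $M_i$ by our convention on $d_S$, so $a$ and $b$ effectively range only over indices with $v_a,v_b\in S$ (if no such pair exists, both sides are $\infty$ and there is nothing to prove).

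For the bound $r_{G[S]\cup P}(v_iv_{i+1})\le M_i$ I would exhibit, for every feasible $a\le i<b$ with $v_a,v_b\in S$, the $s$–$t$ walk obtained by concatenating the prefix $v_0v_1\cdots v_a$ of $P$, a shortest $v_a$–$v_b$ path inside $G[S]\setminus P$ (of length $d_S(v_a,v_b)$), and the suffix $v_bv_{b+1}\cdots v_\ell$ of $P$. Every $P$-edge on this walk has index $<a\le i$ or $\ge b>i$, and the detour uses no $P$-edge at all, so the walk avoids $v_iv_{i+1}$; its length is exactly $\mathrm{pref}(v_a)+d_S(v_a,v_b)+\mathrm{suf}(v_b)$. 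Since $G$ has no negative cycle, this walk contains an $s$–$t$ path of no greater length that still avoids $v_iv_{i+1}$; minimizing over $(a,b)$ gives the claim.

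For the reverse inequality I would take a shortest $s$–$t$ path $Q$ in $G[S]\cup P$ avoiding $v_iv_{i+1}$ and examine its interaction with $P$. Let $v_{j_0}=s, v_{j_1},\dots,v_{j_r}=t$ be the vertices of $P$ lying on $Q$, listed in the order $Q$ traverses them (each occurs once, as $Q$ is a path). Between $v_{j_p}$ and $v_{j_{p+1}}$, $Q$ uses no $P$-vertex internally; since every $P$-edge is directed forward and ends at a $P$-vertex, this segment is either (i) the single $P$-edge $v_{j_p}v_{j_p+1}$, so $j_{p+1}=j_p+1$, or (ii) a path whose edges and internal vertices all lie in $G[S]\setminus P$, which forces $v_{j_p},v_{j_{p+1}}\in S$. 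Now take the smallest $p$ with $j_{p+1}\ge i+1$; then $j_p\le i$, so with $a:=j_p$ and $b:=j_{p+1}$ we have $a\le i<b$. Case (i) is impossible for this pair, as it would make the segment equal to the forbidden edge $v_iv_{i+1}$; hence the segment is of type (ii), so $v_a,v_b\in S$ and the segment is a $v_a$–$v_b$ walk in $G[S]\setminus P$, of length at least $d_S(v_a,v_b)$. Splitting $Q$ at $v_a$ and $v_b$ into (prefix up to $v_a$)$\,\cdot\,$(this segment)$\,\cdot\,$(suffix from $v_b$), and using that the prefix $v_0\cdots v_a$ and the suffix $v_b\cdots v_\ell$ of the shortest path $P$ are shortest $s$–$v_a$ and $v_b$–$t$ paths in $G$ while $Q$ lies in the subgraph $G[S]\cup P$ of $G$, the first and last pieces have length at least $\mathrm{pref}(v_a)$ and $\mathrm{suf}(v_b)$ respectively. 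Since $Q$ has length $r_{G[S]\cup P}(v_iv_{i+1})$, summing the three pieces gives $r_{G[S]\cup P}(v_iv_{i+1})\ge \mathrm{pref}(v_a)+d_S(v_a,v_b)+\mathrm{suf}(v_b)\ge M_i$, as required.

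The only delicate point is the structural step in the reverse direction: even though $Q$ may cross $P$ back and forth arbitrarily, it must ``jump over'' the missing edge $v_iv_{i+1}$ via a detour that stays entirely inside $G[S]\setminus P$, and ruling out case (i) for the critical consecutive pair $(a,b)$ is precisely what makes the middle piece comparable to $d_S(v_a,v_b)$ and closes the argument.
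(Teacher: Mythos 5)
Your proof is correct and follows essentially the same route as the paper: both bound $r_{G[S]\cup P}(v_iv_{i+1})$ by decomposing a replacement path into a prefix, a detour contained in $G[S]\setminus P$, and a suffix, and then argue each piece can be replaced by an optimal one. The only difference is that the paper asserts this decomposition exists without justification, whereas you carefully prove it by tracking the $P$-vertices visited by $Q$ and isolating the critical segment that jumps over index $i$; this is a genuine gap in the paper's exposition that your argument fills in.
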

\begin{proof}
Any $s-t$ path avoiding $v_iv_{i+1}$ in $G[S]\cup P$ can be written as, for some $a\leq i<b$, a concatenation of
$s-v_a$ path $Q_1$, $v_a-v_b$ path $Q_2$ that is contained in $G[S]\setminus P$, and $v_b-t$ path $Q_3$.
Because $P$ is a shortest $s-t$ path in $G$, we can replace
$Q_1$ by the prefix $v_0\ldots v_a$, $Q_2$ by the shortest $v_a-v_b$ path in $G[S]\setminus P$, and $Q_3$ by the suffix $v_b\ldots v_\ell$ without increasing the length.
Therefore, the lemma holds.
\end{proof}

We want to define $f(S)$ as a function that returns a list of $r_{G[S]\cup P}(v_i v_{i+1})$ for all $i\in\{0,\ldots,\ell-1\}$;
however, we cannot do so because the length of this list is not bounded by $|S|$.
Instead, we define $f(S)$ as a function that returns a list of $r_{G[S]\cup P}(v_iv_{i+1})$ for all $i$ with $v_i\in S$.
This succinct representation has enough information because, for any $v_i\not\in S$, we have $r_{G[S]\cup P}(v_iv_{i+1})=r_{G[S]\cup P}(v_{i-1}v_i)$ (or $\infty$ when $i=0$).
We describe \textsc{Increment} and \textsc{Union} below.

\subsubsection*{$\textsc{Increment}(X,f(X),x)$.}
By running Dijkstra's algorithm twice, we can compute $d_{X\cup\{x\}}(x, v)$ and $d_{X\cup\{x\}}(v, x)$ for all $v\in X\cup\{x\}$ in $O(|E[X]|+|X|\log |X|)$ time.
For $v_i\in X\cup\{x\}$, 
we define $L_i:=\min_{a\leq i,v_a\in X\cup\{x\}}(\mathrm{pref}(v_a)+d(v_a,x))$ and
$R_i:=\min_{b>i,v_b\in X\cup\{x\}}(d(x,v_b)+\mathrm{suf}(v_b))$.
By a standard dynamic programming, we can compute $L_i$ and $R_i$ for all $i$ with $v_i\in X\cup\{x\}$ in $O(|X|)$ time.

From Lemma~\ref{lem:replacement:structure}, $r_{G[X\cup\{x\}]\cup P}(v_i v_{i+1})=\mathrm{pref}(v_a)+d_{X\cup\{x\}}(v_a,v_b)+\mathrm{suf}(v_b)$ holds for some $a\leq i<b$.
If $d_{X\cup\{x\}}(v_a,v_b)=d_X(v_a,v_b)$ holds, we have $r_{G[X\cup\{x\}]\cup P}(v_i v_{i+1})=r_{G[X]\cup P}(v_i v_{i+1})$,
and otherwise, we have $d_{X\cup\{x\}}(v_a,v_b)=d_{X\cup\{x\}}(v_a,x)+d_{X\cup\{x\}}(x,v_b)$.
Therefore, we can compute $r_{G[X\cup\{x\}]\cup P}(v_i v_{i+1})$ by taking the minimum of $r_{G[X]\cup P}(v_i v_{i+1})$ and
$\min_{a\leq i<b}(\mathrm{pref}(v_a)+d(v_a,x)+d(x,v_b)+\mathrm{suf}(v_b))=L_i+R_i$.

\subsubsection*{$\textsc{Union}((X_1,f(X_1)),\ldots,(X_c,f(X_c)))$.}
Let $X:=\bigcup_i X_i$.
Because there exist no edges between $X_i$ and $X_j$ for any $i\neq j$, we have $d_X(u,v)=\min_i d_{X_i}(u,v)$ for any $u,v\in X$.
Therefore, from Lemma~\ref{lem:replacement:structure}, we have $r_{G[X]\cup P}(v_i v_{i+1})=\min_j r_{G[X_j]\cup P}(v_i v_{i+1})$.
For efficiently computing $r_{G[X]\cup P}(v_i v_{i+1})$ for all $i$ with $v_i\in X$, we do as follows in increasing order of $i$.

For each $X_j$, we maintain a value $r_j$ so that $r_j=r_{G[X_j]\cup P}(v_i v_{i+1})$ always holds.
Initially, these values are set to $\infty$.
We use a heap for computing $\min_j r_j$ and updating $r_j$ in $O(\log c)$ time.
For processing $i$, we first update $r_j\gets r_{G[X_j]\cup P}(v_i v_{i+1})$ for the set $X_j$ containing $v_i$.
We do not need to update $r_{j'}$ for any other set $X_{j'}$ because $r_{G[X_{j'}]\cup P}(v_i v_{i+1})=r_{G[X_{j'}]\cup P}(v_{i-1} v_i)$ holds.
Then, we compute $r_{G[X]\cup P}(v_i v_{i+1})=\min_j r_j$.

\begin{proof}[Proof of Theorem~\ref{replacement paths}]
The algorithm $\textsc{Increment}(X,f(X),x)$ correctly computes $f(X\cup\{x\})$ in $O(|E[X]|+|X|\log |X|)$ time and
the algorithm $\textsc{Union}((X_1,f(X_1)),\ldots,(X_c,f(X_c)))$ correctly computes $f(\bigcup_i X_i)$ in $O(|\bigcup_i X_i|\log c)=O(|\bigcup_i X_i|\log |\bigcup_i X_i|)$ time.
Therefore, from Theorem~\ref{thm:framework}, we can compute $f(V)$, i.e., $r_{G\cup P}(e)=r_G(e)$ for all edges $e$ on $P$, in $O(k(m+n\log n))$ time.
\end{proof}

\subsection{2-hop cover}
Let $G=(V, E)$ be a directed graph with an edge-weight function $w: E\to\R_{\geq 0}$.
A \emph{2-hop cover} of $G$ is the following data structure $(L^+,L^-)$ for efficiently answering distance queries.
For each vertex $u\in V$, we assign a set $L^+(u)$ of pairs $(v,d^+_{uv})\in V\times\R_{\geq 0}$
and a set $L^-(u)$ of pairs $(v,d^-_{vu})\in V\times \R_{\geq 0}$.
We require that, for every pair of vertices $s,t\in V$, the shortest-path distance from $s$ to $t$ is exactly the minimum of
$d^+_{sh} + d^-_{ht}$ among all pairs $(h,d^+_{sh})\in L^+(s)$ and $(h,d^-_{ht})\in L^-(t)$.
The \emph{size} of the 2-hop cover is defined as $\sum_{u\in V}|L^+(u)|+|L^-(u)|$, and the \emph{maximum label size} is defined as $\max_{u\in V}|L^+(u)|+|L^-(u)|$.
Using a 2-hop cover of maximum label size $T$, we can answer a distance query in $O(T)$ time.
In this section, we prove the following theorem.

\begin{theorem}\label{thm:2-hop}
Given a non-negative edge-weighted directed graph and its elimination forest of depth $k$, we can construct a 2-hop cover of maximum label size $2k$ in $O(k(m+n\log n))$ time.
\end{theorem}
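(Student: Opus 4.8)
The plan is to instantiate the framework of Theorem~\ref{thm:framework}. Fix the given elimination forest $T$ of $G$. As one sees from Algorithm~\ref{alg:framework}, every forest $T_S$ that arises in the recursion is a disjoint union of subtrees of $T$ (each being $T$ restricted to some vertex together with all of its descendants), so for a vertex $a\in S$ the subtree rooted at $a$ is the same in $T_S$ as in $T$; write $V(a)\subseteq V$ for its vertex set. I take $f(S)$ to be the labeling $(L^+,L^-)$ in which, for each $u\in S$ and each ancestor $a$ of $u$ in $T_S$ (including $a=u$), $L^+(u)$ contains the pair $(a,\mathrm{dist}_{G[V(a)]}(u,a))$ and $L^-(u)$ contains $(a,\mathrm{dist}_{G[V(a)]}(a,u))$, a pair being omitted when its distance is $\infty$. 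Since a vertex of $T$ has at most $k$ ancestors (counting itself), $f(V)$ has maximum label size at most $2k$; sorting each list by hub in $O(k\log k)$ time per vertex (which is within the overall budget) then gives $O(k)$-time distance queries.

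To see that $f(S)$ is a valid 2-hop cover of $G[S]$, I use the following property of elimination forests: if $P$ is a walk in $G[S]$ and $h$ is a vertex of $P$ of minimum depth in $T_S$, then $h$ is an ancestor in $T_S$ of every vertex of $P$. Indeed, consecutive vertices of $P$ are comparable in $T_S$ because $T_S$ is an elimination forest of $G[S]$, so a short induction walking outward from $h$ along $P$ in both directions shows that every vertex of $P$ is a descendant of $h$. Now fix $s,t\in S$ with $\mathrm{dist}_{G[S]}(s,t)<\infty$, let $P$ be a shortest $s$--$t$ path in $G[S]$, and let $h$ be its vertex of minimum depth in $T_S$. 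Then $h$ is a common ancestor of $s$ and $t$, and all of $P$ lies in $V(h)$; since $P$ is shortest, its $s$--$h$ prefix and $h$--$t$ suffix realize $\mathrm{dist}_{G[S]}(s,h)$ and $\mathrm{dist}_{G[S]}(h,t)$, and because these subpaths lie inside $V(h)\subseteq S$ we get $\mathrm{dist}_{G[V(h)]}(s,h)=\mathrm{dist}_{G[S]}(s,h)$ and $\mathrm{dist}_{G[V(h)]}(h,t)=\mathrm{dist}_{G[S]}(h,t)$. Hence the hub $h$ witnesses exactly $\mathrm{dist}_{G[S]}(s,t)$. Conversely, for any common hub $a$, since $V(a)\subseteq S$ we have $\mathrm{dist}_{G[V(a)]}(s,a)+\mathrm{dist}_{G[V(a)]}(a,t)\ge\mathrm{dist}_{G[S]}(s,a)+\mathrm{dist}_{G[S]}(a,t)\ge\mathrm{dist}_{G[S]}(s,t)$, so no hub underestimates; thus the minimum over common hubs equals $\mathrm{dist}_{G[S]}(s,t)$ exactly, and the same argument shows there is no finite common hub when $\mathrm{dist}_{G[S]}(s,t)=\infty$ (in particular when $s,t$ lie in different trees of $T_S$).

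It remains to give the two subroutines. In $\textsc{Union}((X_1,f(X_1)),\dots,(X_c,f(X_c)))$ there are no edges between distinct $X_i$, and $V(a)\subseteq X_i$ for every $a\in X_i$, so no stored distance changes and $f(\bigcup_i X_i)$ is simply the disjoint union of the $f(X_i)$; with a global array indexed by vertex this costs $O(\sum_i|X_i|)$. In $\textsc{Increment}(X,f(X),x)$ the added vertex $x$ is the root of its tree, hence an ancestor of every $u\in X$, so the only new hub to handle is $a=x$; since $V(x)=X\cup\{x\}$, one Dijkstra run from $x$ in $G[X\cup\{x\}]$ and one in its reverse graph yield $\mathrm{dist}_{G[X\cup\{x\}]}(x,u)$ and $\mathrm{dist}_{G[X\cup\{x\}]}(u,x)$ for all $u\in X$, which we append (when finite) to $L^-(u)$ and $L^+(u)$ respectively, and we set $L^+(x)=L^-(x)=\{(x,0)\}$. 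All previously stored entries are left untouched: for an ancestor $a\neq x$ of $u$ we have $x\notin V(a)$, so $\mathrm{dist}_{G[V(a)]}(u,a)$ is unaffected, and by induction it was already recorded correctly (being the value a previous Dijkstra computed when $a$ was added). This runs in $O(|E[X\cup\{x\}]|+|X|\log|X|)$ time, so plugging $T(n,m)=O(m+n\log n)$ into Theorem~\ref{thm:framework} gives the claimed $O(k(m+n\log n))$ bound.

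The step I expect to need the most care is the choice of which distances to store. Storing the ``global'' distances $\mathrm{dist}_{G[S]}$ would force every \textsc{Increment} call to revise all of the (up to $\Theta(nk)$ many) previously recorded values, which would ruin the running time; storing the subtree-local distances $\mathrm{dist}_{G[V(a)]}$ needs no revision at all, and the correctness argument above is precisely what certifies that these weaker values still witness every shortest-path distance, because the decisive hub $h$ always lies on a shortest path that is entirely contained in $V(h)$.
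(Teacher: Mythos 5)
Your proposal is correct and takes essentially the same approach as the paper: the same \textsc{Increment} (add $x$ as a new hub via two Dijkstra runs in $G[X\cup\{x\}]$) and the same \textsc{Union} (concatenation), with the same label-size bound via ancestors in the elimination forest. The only difference is presentational — you give an explicit closed-form invariant (storing subtree-local distances $\mathrm{dist}_{G[V(a)]}$) and prove its validity via the minimum-depth-vertex-on-a-path argument, whereas the paper keeps $f(S)$ abstract as ``a 2-hop cover of $G[S]$'' and verifies the invariant inductively inside \textsc{Increment}.
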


For $S\subseteq V$, we define $f(S)$ as a function that returns a 2-hop cover of $G[S]$.
We denote the shortest-path distance from $s$ to $t$ in $G[S]$ by $d_S(s,t)$.
We denote the result of the distance query from $s$ to $t$ for $f(S)$ by $q_S(s,t)$.
We now describe algorithms \textsc{Increment} and \textsc{Union}.

\subsubsection*{$\textsc{Increment}(X,f(X),x)$.}
Let $(L^+,L^-)$ be the 2-hop cover of $G[X]$.
By running Dijkstra's algorithm twice, we compute the shortest-path distances from $x$ and to $x$ in $G[X\cup\{x\}]$.
Then, for each $u\in X\cup\{x\}$, we insert $(x,d_{X\cup\{x\}}(u,x))$ into $L^+(u)$ and $(x,d_{X\cup\{x\}}(x,u))$ into $L^-(u)$.
Finally, we return the updated $(L^+,L^-)$ as $f(X\cup\{x\})$.

\begin{claim}
$f(X\cup\{x\})$ is a 2-hop cover of $G[X\cup\{x\}]$.
\end{claim}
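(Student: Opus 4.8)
The plan is to verify the defining property of a 2-hop cover directly for the updated pair $(L^+,L^-)$, namely that for every $s,t\in X\cup\{x\}$ the query result $q_{X\cup\{x\}}(s,t)$ equals $d_{X\cup\{x\}}(s,t)$. Since we only add new labels, the query result can only decrease relative to $q_X$, so it suffices to check two things: (i) the query never underestimates, i.e. every candidate value $d^+_{sh}+d^-_{ht}$ arising from the labels is the length of some genuine $s$-to-$t$ walk in $G[X\cup\{x\}]$ and hence at least $d_{X\cup\{x\}}(s,t)$; and (ii) the query attains $d_{X\cup\{x\}}(s,t)$.

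For (ii) I would fix a shortest $s$-to-$t$ path $\pi$ in $G[X\cup\{x\}]$ and split into two cases. If $\pi$ avoids $x$, then $\pi$ lies entirely in $G[X]$, so $d_{X\cup\{x\}}(s,t)=d_X(s,t)$, and since $f(X)$ is already a 2-hop cover of $G[X]$ (the induction hypothesis of the framework / the statement of the previous theorem) there is a hub $h\in L^+(s)\cap$-witness with $d^+_{sh}+d^-_{ht}=d_X(s,t)$; these labels are still present after the update, so $q_{X\cup\{x\}}(s,t)\le d_X(s,t)=d_{X\cup\{x\}}(s,t)$. If $\pi$ passes through $x$, write $\pi$ as an $s$-to-$x$ prefix followed by an $x$-to-$t$ suffix; their lengths are $d_{X\cup\{x\}}(s,x)$ and $d_{X\cup\{x\}}(x,t)$ respectively (sub-paths of a shortest path are shortest), and these are exactly the values we stored, since we inserted $(x,d_{X\cup\{x\}}(s,x))$ into $L^+(s)$ and $(x,d_{X\cup\{x\}}(x,t))$ into $L^-(t)$. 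Hence the hub $h=x$ witnesses $q_{X\cup\{x\}}(s,t)\le d_{X\cup\{x\}}(s,x)+d_{X\cup\{x\}}(x,t)=d_{X\cup\{x\}}(s,t)$.

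For (i), a label $(h,d^+_{sh})\in L^+(s)$ with $h\in X$ carries, by the induction hypothesis, the true value of some $s$-to-$h$ walk in $G[X]\subseteq G[X\cup\{x\}]$, and similarly for $L^-$; concatenating gives an $s$-to-$t$ walk of length $d^+_{sh}+d^-_{ht}$, so this value is $\ge d_{X\cup\{x\}}(s,t)$. For the newly inserted labels with $h=x$, we stored exactly $d_{X\cup\{x\}}(s,x)$ and $d_{X\cup\{x\}}(x,t)$, whose sum is again the length of a genuine walk and hence $\ge d_{X\cup\{x\}}(s,t)$. Combining (i) and (ii) gives $q_{X\cup\{x\}}(s,t)=d_{X\cup\{x\}}(s,t)$ for all $s,t$, which is the claim. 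The main point to be careful about — though it is not really an obstacle — is keeping straight that Dijkstra is run in $G[X\cup\{x\}]$ (not $G[X]$), so that the distances to and from $x$ already account for paths using the other vertices of $X$; this is what makes the case "$\pi$ passes through $x$" go through with the single hub $x$. The running time is two Dijkstra calls plus $O(|X|)$ label insertions, i.e. $O(|E[X\cup\{x\}]|+|X|\log|X|)$, and each vertex adds one entry to some $L^+$ and one to some $L^-$ per ancestor in the elimination forest, giving maximum label size $2k$; the claimed bounds then follow from Theorem~\ref{thm:framework}.
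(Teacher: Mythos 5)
Your proof is correct and follows essentially the same route as the paper's: you verify $q_{X\cup\{x\}}(s,t)=d_{X\cup\{x\}}(s,t)$ by case-splitting on whether the shortest path uses $x$, using the induction hypothesis $q_X=d_X$ for the old hubs and the two Dijkstra runs for the new hub $x$. The paper states this more compactly via the pair of identities $d_{X\cup\{x\}}(s,t)=\min(d_X(s,t),\delta)$ and $q_{X\cup\{x\}}(s,t)=\min(q_X(s,t),\delta)$ where $\delta=d_{X\cup\{x\}}(s,x)+d_{X\cup\{x\}}(x,t)$, which folds your steps (i) and (ii) into a single chain of equalities; one small remark is that in your step (i) you assert that each old label individually carries the length of a genuine walk, which is stronger than what the induction hypothesis ($q_X=d_X$) literally gives you — the cleaner justification, as in the paper, is that $q_X(s,t)=d_X(s,t)$ already forces every old candidate sum to be $\geq d_X(s,t)\geq d_{X\cup\{x\}}(s,t)$ without reasoning about the individual labels.
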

\begin{proof}
It suffices to show that $q_{X\cup\{x\}}(s,t)=d_{X\cup\{x\}}(s,t)$ holds for every $s,t\in X\cup\{x\}$.
The claim clearly holds when $s=x$ or $t=x$.
For $s,t\in X$, let $\delta:=d_{X\cup\{x\}}(s,x)+d_{X\cup\{x\}}(x,t)$.
Then, we have $d_{X\cup\{x\}}(s,t)=\min(d_X(s,t),\delta)$.
From the construction of $f(X\cup\{x\})$, we have $q_{X\cup\{x\}}=\min(q_X(s,t),\delta)=\min(d_X(s,t),\delta)$.
Therefore, the claim holds.
\end{proof}

\subsubsection*{$\textsc{Union}((X_1,f(X_1)),\ldots,(X_c,f(X_c)))$.}
Because there exist no paths connecting $X_i$ and $X_j$ for any $i\neq j$, we can construct a 2-hop cover of $G[\bigcup_i X_i]$ by
simply concatenating the 2-hop covers $f(X_1), \ldots, f(X_c)$.

\begin{proof}[Proof of Theorem~\ref{thm:2-hop}]
The algorithm $\textsc{Increment}(X,f(X),x)$ correctly computes $f(X\cup\{x\})$ in $O(|E[X]| + |X|\log |X|)$ time and
the algorithm $\textsc{Union}((X_1,f(X_1)),\ldots,(X_c,f(X_c))$ correctly computes $f(\bigcup_i X_i)$ in $O(|\bigcup_i X_i|)$ time.
Therefore, from Theorem~\ref{thm:framework}, we can compute a 2-hop cover in $O(k(m + n\log n))$ time.
Let $(L^+,L^-)$ be the 2-hop cover obtained by computing $f(V)$.
For each element $(u,d^+_{uv})\in L^+(u)$ or $(u,d^-_{vu})\in L^-(u)$, $v$ is located on the path from $u$ to the root in the elimination forest.
Therefore, we have $|L^+(u)|+|L^-(u)|\leq 2k$ for every vertex $u\in V$.
\end{proof}

\section*{Acknowledgement}
We would like to thank Hiroshi Imai for pointing out to us the reference~\cite{LiptonT80}.

\bibliography{paper}

\appendix
\section{Reductions to weighted matching}\label{sec:reductions}
We present standard reductions from several problems to \problem{Maximum-weight Perfect Matching}.
Given a graph of $n$ vertices, $m$ edges, and tree-depth $k$, each of these reductions constructs a graph of $O(n)$ vertices, $O(m)$ edges, and tree-depth $O(k)$ in linear time.
Therefore, the problem can be solved in $O(k(m+n\log n))$ time.

First, we present a reduction from \problem{Maximum-weight Matching}.
Given an undirected graph $G=(V, E)$ with an edge-weight function $w:E\to\R$, our task is to find a maximum-weight matching of $G$ (which does not need to have the maximum cardinality).
We create a copy $G'=(V',E')$ of $G$ and connect each vertex $v\in V$ and its copy $v'\in V'$ by an edge of weight $0$.
Let $M$ be a maximum-weight matching of the original graph.
We can construct a perfect matching $M'$ of the reduced graph of weight $2w(M)$ by taking $M$, the copy of $M$ in $G'$, and the edge $vv'$ for every exposed vertex $v$.
Any perfect matching of the reduced graph consists of a matching of $G$, a matching of $G'$, and edges of weight zero,
and therefore, it has weight at most $2w(M)$.
Thus, by computing a maximum-weight perfect matching of the reduced graph, we can compute a maximum-weight matching of the original graph.
The reduced graph has $2n$ vertices and $2m+n$ edges.
From a given elimination forest of depth $k$ for the original graph,
we can construct an elimination forest of depth $2k$ for the reduced graph by replacing each vertex $v$ in the forest by a path $vv'$.

Next, we present a reduction from \problem{Maximum-weight Maximum-size Matching}, which is a problem of finding a matching of maximum weight subject to the constraint that it has the maximum cardinality.
We only modify the edge-weight function without modifying the graph.
From the edge-weight function $w:E\to\R$ of the input graph, we construct an edge-weight function $w'$ such that $w'(e):=w(e)+W$ for a large value $W$ satisfying $W>\sum_{e\in E} |w(e)|$.
For any matchings $M$ and $M'$ with $|M|>|M'|$, we have $w'(M)-w'(M')\geq W-\sum_{e\in E}|w(e)|>0$,
and for any matchings $M$ and $M'$ of the same cardinality, we have $w'(M)-w'(M')=w(M)-w(M')$.
Therefore, $M$ is a maximum-weight maximum-size matching under $w$ if and only if $M$ is a maximum-weight matching under $w'$.

Finally, we present a reduction from \problem{Minimum-weight Disjoint $A$-Paths}.
For a set of terminals $A\subseteq V$, a path is called an \emph{$A$-path} if it connects two distinct vertices in $A$ and has no internal vertex in $A$.
A set of $A$-paths is called \emph{disjoint} if no two paths share a vertex (including the terminals $A$).
Given a directed graph $G=(V, E)$ with an edge-weight function $w:E\to\R_{\geq 0}$ and a set of terminals $A\subseteq V$,
our task is to find disjoint $A$-paths of the minimum total weight subject to the constraint that the number of paths is maximized.
Note that we can easily reduce the problem of finding a minimum-weight vertex-disjoint $S-T$ paths to this problem
by removing all the edges incoming to $S$ or outgoing from $T$ and setting $A=S\cup T$.

The following reduction to \problem{Minimum-weight Maximum-size Matching} is based on the reduction for the unweighted version by Kriesell~\cite{DBLP:journals/jct/Kriesell05}.
We construct a graph $G'$ as follows.
For each vertex $v\in V\setminus A$, we create two vertices $\{v^+, v^-\}$ and an edge $v^+v^-$ of weight $0$.
For each terminal $a\in A$, we create a single vertex $a=a^+=a^-$.
For each edge $uv\in E$, we insert an edge $u^+ v^-$ of weight $w(uv)$.
$G'$ has at most $2n$ vertices and $2m+n$ edges.
From a given elimination forest of depth $k$ for $G$,
we can construct an elimination forest of depth $2k$ for $G'$ by replacing each vertex $v\in V\setminus A$ in the forest by a path $v^+v^-$.
Finally, we prove the correctness of the reduction.

From a set $\mathcal{P}$ of $d$ disjoint $A$-paths in $G$, we can construct a matching of cardinality $|V\setminus A|+d$ of the same weight in $G'$
by taking $u^+v^-$ for each $uv\in E$ used in $\mathcal{P}$ and $v^+v^-$ for each $v\in V\setminus A$ not used in $\mathcal{P}$.
From a matching $M$ of cardinality $|V\setminus A|+d$ in $G'$, we can construct a set $\mathcal{P}$ of $d$ disjoint $A$-paths of the same or smaller weight in $G$ as follows.
If one of $v^+$ and $v^-$ is exposed for some $v\in V\setminus A$, we obtain another matching without increasing the weight by
discarding the edge in $M$ incident to $v^+$ or $v^-$ and by including the edge $v^+v^-$.
Now we can assume that every exposed vertex is in $A$, and therefore the set of edges $\{uv\mid u^+v^-\in M, u\neq v\}$ induces a set $\mathcal{P}$ of $d$ disjoint $A$-paths and some cycles.
Because $w$ is non-negative, we can discard the cycles without increasing the weight.
Therefore, from the minimum-weight maximum-size matching of $G'$, we can construct a minimum-weight disjoint $A$-paths of $G$ in linear time.

\end{document}